\let\rOmega\Omega\renewcommand{\Omega}{\mathrm{\rOmega}}
\let\rTheta\Theta\renewcommand{\Theta}{\mathrm{\rTheta}}
\renewcommand{\O}{\mathcal{O}}
\newcommand{\M}{\mathrm{M}}
\newcommand{\F}{\mathrm{F}}
\newcommand{\N}{\mathrm{N}}
\renewcommand{\P}{\mathrm{P}}
\renewcommand{\l}[1]{\lvert #1\rvert}
\renewcommand{\epsilon}{\varepsilon}
\newtheorem{theorem}{Theorem}[section]
\newtheorem{lemma}[theorem]{Lemma}
\newtheorem{corollary}[theorem]{Corollary}
\newcommand{\qed}{\mbox{}\hspace*{\fill}\nolinebreak\mbox{\rule{6pt}{6pt}}}
\newenvironment{proof}{\vspace{-2mm}\noindent {\bf Proof:}}{\qed\par\medskip}
\begin{document}

\title{On the complexity of computing prime tables}

\author{Mart\'{\i}n Farach-Colton}
\author{Meng-Tsung Tsai}
\affil{Rutgers University \\
\vspace{0.2cm}
\{farach,mtsung.tsai\}@cs.rutgers.edu
}
\date{}

\maketitle

\begin{abstract}
Many large arithmetic computations rely on tables of all primes less than $n$. For example, the fastest algorithms for computing $n!$ takes time $\O(\M(n\log n) + \P(n))$, where $\M(n)$ is the time to multiply two $n$-bit numbers, and $\P(n)$ is the time to compute a prime table up to $n$.  The fastest algorithm to compute $\binom{n}{n/2}$ also uses a prime table.  We show that it takes time $\O(\M(n) + \P(n))$.  
\medskip

In various models, the best bound on $\P(n)$ is greater than $\M(n\log n)$, given advances in the complexity of multiplication~\cite{Furer07,De08}.  In this paper, we give two algorithms to computing prime tables and analyze their complexity on a multitape Turing machine, one of the standard models for analyzing such algorithms.  These two algorithms run in time $\O(\M(n\log n))$ and $\O(n\log^2 n/\log \log n)$, respectively.  We achieve our results by speeding up Atkin's sieve.
\medskip

Given that the current best bound on $\M(n)$ is $n\log n 2^{\O(\log^*n)}$, the second algorithm is faster and improves on the previous best algorithm by a factor of $\log^2\log n$.  Our fast prime-table algorithms speed up both the computation of $n!$ and $\binom{n}{n/2}$.
\medskip 

Finally, we show that computing the factorial takes $\Omega(\M(n \log^{4/7 - \epsilon} n))$ for any constant $\epsilon > 0$ assuming only multiplication is allowed. 
\end{abstract}

\smallskip
\textbf{Keywords.} prime tables, factorial, multiplication, lower bound

\section{Introduction}

Let $\P(n)$ be the time to compute prime table $T_n$, that is, a table of all primes from $2$ to $n$.  The best bound for $\P(n)$ on a log-RAM is $\O(n/\log\log n)$, using the Sieve of Atkin, and $\O(n\log^2 n\log\log n)$ on the multitape Turing machine (TM), a standard model for analyzing prime table computation, factorial computation, and other large arithmetic computations~\cite{Schonhage71,Schonhage94,Furer07}. This TM algorithm is due to Sch\"{o}nhage et al.~\cite{Schonhage94} as is based on the 
Sieve of Eratosthenes.  

The main result of this paper is two algorithms that improve the time to compute $T_n$ on a TM.
One runs in $\O(n\log^2 n/\log \log n)$ 
and thus speeds up Sch\"{o}nhage's algorithm by a factor of $\log^2\log n$.

The other has a running time that depends on the time to multiply large numbers.  Let $\M(a,b)$ be the time to multiply an $a$-bit number with a $b$-bit number, and let $\M(a) = \M(a,a)$. We make the standard assumption~\cite{Knuth97} that $f(n) = \M(n)/n$ is a monotone non-decreasing function. Then we give a prime-table algorithm that runs in time $\O(\M(n\log n))$ on a TM. F\"{u}rer's algorithm~\cite{Furer07} gives the best bound for $\M(n)$ on a TM, which is $n\log n 2^{\O(\log^* n)}$, a bound that was later achieved by a different method by De et al.~\cite{De08}, so our second algorithm is currently slower than the first algorithm.  

Prime tables are used to speed up many types of computation.  For example, the fastest algorithms for computing $n!$ depend on prime tables~\cite{Borwein85,Vardi91,Schonhage94}.
 Sch\"{o}nhage's algorithm~\cite{Schonhage94}  is fastest and takes time $\O(\M(n\log n) + \P(n))$.

The number of bits in $n!$ is $\Theta(n\log n)$, and Borwein~\cite{Borwein85} conjectured that computing $n!$ takes  $\Theta(\M(n\log n))$ time.  On the log-RAM, 
F\"{u}rer~\cite{Furer14} showed that $\M(n) = \O(n)$.  So on the log-RAM, the upper bound of Borwein's conjecture seems to be true, since $\M(n\log n)$ dominates $\O(n/\log\log n)$ for now.

On a TM, there is a simple lower bound of $\Omega(n\log n)$ to compute $n!$, since that is the number of TM characters needed to represent the output.  This contrasts with the $\O(n)$-word output on the log-RAM.  On the other hand, no $\O(\M(n\log n))$-time algorithm was known in this model, since before our improved prime-table algorithms, $\P(n)$ dominated $\M(n\log n)$\footnote{We note that before F\"{u}rer's algorithm, the opposite was true.  This is because before F\"{u}rer's algorithm, the best bound on $\M(n)$ was $\O(n\log n\log\log n)$~\cite{Schonhage71}.}.
Using our $\O(\M(n\log n))$-time prime-table algorithm, the time to compute $n!$ is improved to $\O(\M(n\log n))$.  If Borwein's conjecture turns out to be true, this algorithm will turn out to be optimal for computing $n!$.

Another use of prime tables is in the computation of binomial coefficients.
The exact complexity of computing binomial coefficients hadn't been analyzed, but here we show that a popular algorithm takes time $\O(\M(n) + \P(n))$.  Thus our faster algorithm also improves this running time by $\log^2\log n$.

Finally, we consider lower bounds for computing $n!$.  
Although we do not produce a general lower bound for computing $n!$ on a TM\footnote{And indeed, such a result would be a much bigger deal than any upper bound!}, we do show a lower bound for algorithms on the following restricted model. We do not restrict which operation can be used but we assume that the factorial $n!$ is output by a multiplication. We assume that a multiplication can only operate on two integers, each of which can be an integer of $o(n \log n)$ bits or a product computed by a multiplication. Under this restriction, we show a lower bound
\begin{equation}
\label{eqn:glower}
\Omega\bigg(\max_t\bigg\{\M_{t^{1/2-\epsilon}}\bigg(\frac{1}{t} n\log n\bigg), \frac{t}{w} n\log n\bigg\}\bigg) \mbox{ for } t \in [1, n], 
\end{equation}
where $w$ denotes the word size in the model.  Given an upper bound and a lower bound for $\M(n)$, we can simplify the lower bound in Equation~\eqref{eqn:glower}.

On the Turing Machine, we know that $\M(n)$ has a simple linear lower bound $\Omega(n)$ and, due to F\"{u}rer~\cite{Furer07} and De et al.~\cite{De08}, we have an upper bound $\M(n) = n \log n 2^{\O(\log^* n)}$. In that case, we have a lower bound in the multiplication model of
$$
\Omega(\M(n \log^{4/7 - \epsilon} n)) \mbox{ for any constant } \epsilon > 0.
$$
On the log-RAM, we know that $\M(n)$ has a lower bound of $\Omega(n/\log n)$ because operations on $\O(\log n)$ bit words take at least constant time. The upper bound for $\M(n)$, also due to F\"{u}rer~\cite{Furer14}, is $\O(n)$. In that case, under the multiplication restriction, we have the same lower bound as on the Turing Machine. They coincide because both models have a $\log^{1+\epsilon} n$ gap between the lower and upper bounds of $\M(n)$.

\paragraph{\textbf{Organization.}} In Section~\ref{sec:related}, we present the related work for computing prime tables. We propose two algorithm in Section~\ref{sec:prime}. Last, in Section~\ref{sec:lower}, we show a lower bound of computing factorials. The related work and new upper bounds for factorials and binomials can be found in the appendix, Sections~\ref{sec:background},~\ref{sec:upper}. 

\section{Background and Related Work \label{sec:related}}

In this section, we present the relevant background and related work on computing prime tables and defer those for factorials and binomial coefficients to Section~\ref{sec:background}.

The Sieve of Eratosthenes is the standard algorithm used in RAM model.  It creates a bit table where each prime is marked with a 1 and each composite is marked with a 0.  The multiples of each prime found so far are set to 0, each in $\O(1)$ time, and thus the whole algorithm takes time $\sum_{p \le n} n/p  = \O(n \log\log n)$. However, on a TM, each multiple of a prime cannot be marked in $\O(1)$ time. Instead, marking all the multiples of a single prime takes $\O(n)$ time, since the entire table must be traversed.  Since any composite number up to $n$ has some prime factor of at most $\sqrt{n}$, and there are $\O(\sqrt{n}/\log n)$ such primes, this approach takes time $\O(n^{3/2}/\log n)$.

Sch\"{o}hage et al. give an algorithm to compute a prime table from 2 to $n$ in $\O(n \log^2 n \log\log n)$ time~\cite{Schonhage94}. His algorithm, for each prime $p\leq\sqrt{n}$, generates a sorted list\footnote{It is not the case that each list occupy a tape; otherwise, $\omega(1)$ tapes are required. To merge these lists, put half of the lists on a tape, half on the other, merge them pairwise, output the sorted lists on another two tapes and recurse. In this way, 4 tapes are enough.} of the multiples of $p$, and then merges the $\O(\sqrt{n}/\log n)$ lists so generated. The total number of integers on these lists is $\O(n \log \log n)$, each integer needs to be merged $\O(\log n)$ times, and each integer has $\O(\log n)$ bits. Therefore, Sch\"{o}hage's algorithm has running time $\O(n \log^2 n \log\log n)$.

Alternatively, one can use the AKS primality test~\cite{Agrawal02} on each integer in the range from $2$ to $n$. The fastest known variant of the AKS primality test is 
due to Lenstra and Pomerance and takes $\tilde{\O}(\log^{6} n)$ time per test on a TM.  If Agrawal's conjecture~\cite{Agrawal02} is true, it takes $\tilde{\O}(\log^3 n)$ time. Whether the conjecture is true or not, it would still take $\Omega(n \log^3 n)$ time to compute a prime table.  One can use the base-2 Fermat test,
$$
2^n \equiv 2 \pmod{n},
$$
to screen out a majority of composite numbers. This would take $\O(n\log n\M(\log n))$, which is dominated by the AKS phase.
All prime numbers and $o(n/\log n)$ composite numbers can pass the base-2 Fermat test~\cite{Guy04}. Therefore, it reduces the complexity by a $\log n$ factor.  In this case, it would take a finer analysis of AKS  and settling Agrawal's conjecture to determine the exact complexity of this algorithm.  It would likely take 
$\tilde{\O}(n \log^2 n) = \O(n\log^2 n \log^k\log n)$ for some $k>0$, and this would improve on Sch\"{o}hage's algorithm if $k<1$.

We show how to implement the Sieve of Atkin to achieve a running time $\min\{\O(n \log^2 n/\log \log n), \O(\M(n \log n))\}$ on the Turing Machine in Section~\ref{sec:prime}.

\section{Fast algorithms for Atkin's Sieve \label{sec:prime}}

In this section, we give two algorithms for implementing Atkin's Sieve on a TM.  The first runs in time $\O(n\log^2 n/\log\log n)$.  The second runs in time $\O(\M(n\log n))$.  Given the state of the art in multiplication, the first is faster.  We present both, in case a faster multiplication algorithm is discovered.

\subsection{Atkin's Sieve in $\O(n\log^2 n/\log\log n)$}

We define some notions before proceeding to the proof. A \emph{squarefree} integer denotes an integer that has no divisor that is a square number other than 1. Let $\N_{f(x, y)}(k) = 0$ if there are even number of integer pairs $(x, y)$ that have $x > 0, y > 0$ and $f(x, y) = k$; or 1, otherwise. Similarly, let $\N'_{f(x, y)}(k) = 0$ if there are even number of integer pairs $(x, y)$ that have $x > y > 0$ and $f(x, y) = k$; or 1, otherwise. The key distinction is that the latter requires that $x > y$. In~\cite{Atkin04}, Atkin and Bernstein show how to test primality based $\N$ and $\N'$, as shown in Theorem~\ref{thm:atkin}.

\begin{theorem}[\hspace{-1pt}{\cite[Theorems~6.1-6.3]{Atkin04}}]\label{thm:atkin}
For every squarefree integer $k \in 1+4\mathbb{N}$, $k$ is prime iff $\N_{x^2+4y^2}(k) = 1$; for every squarefree integer $k \in 1+6\mathbb{N}$, $k$ is prime iff $\N_{x^2+3y^2}(k) = 1$; 
for every squarefree integer $k \in 11+12\mathbb{N}$, $k$ is prime iff $\N'_{3x^2-y^2}(k) = 1$. 
\end{theorem}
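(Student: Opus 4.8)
The plan is to reduce each of the three primality tests to a statement about the \emph{parity} of a representation count, and to compute that parity by writing the count as a multiplicative divisor sum twisted by a quadratic character. Squarefreeness is what makes the parity clean: for squarefree $k$ the governing sum $\sum_{d \mid k}\chi(d)$ factors as $\prod_{p \mid k}(1 + \chi(p))$, each local factor is either $0$ or $2$, and the product is therefore a power of two that is odd precisely when $k$ has a single prime factor, i.e. precisely when $k$ is prime. The congruence hypotheses are chosen so that the relevant character equals $+1$ at a prime $k$ of the stated shape (so that the form actually represents $k$) and so that degenerate solutions on the axes or the diagonal never occur.

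For the two positive-definite forms this is the classical theory of representations by binary quadratic forms. First I would treat $x^2 + 4y^2$ with $k \equiv 1 \pmod 4$: writing $x^2 + 4y^2 = x^2 + (2y)^2$ identifies $\N_{x^2+4y^2}(k)$ with the number of representations of $k$ as a sum of two squares having one odd and one even coordinate, restricted to the positive quadrant. Jacobi's two-square theorem gives the full signed count $r_2(k) = 4\sum_{d \mid k}\chi_{-4}(d)$, and since $k$ is odd, squarefree and $>1$ it is not a perfect square and admits no solution on an axis or with equal coordinates, so the restricted count is exactly $r_2(k)/8 = \tfrac12\sum_{d \mid k}\chi_{-4}(d)$. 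Multiplicativity turns this into $\tfrac12\prod_{p \mid k}(1 + \chi_{-4}(p))$, which vanishes as soon as some prime $p \equiv 3 \pmod 4$ divides $k$ and otherwise equals $2^{\omega(k)-1}$; this is odd if and only if $\omega(k) = 1$, i.e. if and only if $k$ is prime. The form $x^2 + 3y^2$ with $k \equiv 1 \pmod 6$ is essentially identical, now using discriminant $-12$: because $h(-12)=1$ the principal form $x^2 + 3y^2$ is the only reduced class, the positive-quadrant representation count equals $\tfrac12\sum_{d \mid k}\chi_{-3}(d)$, and $k \equiv 1 \pmod 6$ (hence coprime to $6$) again removes all degenerate solutions, so the same multiplicativity-and-parity computation gives an odd count if and only if $k$ is prime.

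The genuinely harder case is the indefinite form $3x^2 - y^2$ with $k \equiv 11 \pmod{12}$, and I expect its central step to be the main obstacle. An indefinite form represents each value it represents infinitely often, since its automorphism group is infinite, generated by the fundamental unit of $\b{Q}(\sqrt 3)$; the role of the restriction $x > y > 0$ in the definition of $\N'$ is exactly to carve out a fundamental domain for this action, so that $\N'_{3x^2-y^2}(k)$ counts automorph \emph{orbits} rather than individual solutions. My plan here is: (i) show by the reduction theory of indefinite binary quadratic forms that every representation of $k$ by $3x^2 - y^2$ is automorph-equivalent to exactly one with $x > y > 0$, so that $\N'$ records the parity of the number of orbits; (ii) identify that number of orbits with a class-number quantity of discriminant $12$ that unwinds into the multiplicative divisor sum $\sum_{d\mid k}\chi_{12}(d)$; and (iii) finish with the same squarefree parity argument, where $k \equiv 11 \pmod{12}$ guarantees the character value that makes ``prime'' coincide with ``odd orbit count.'' Step (i) --- proving that $x > y > 0$ is an \emph{exact} fundamental domain, with both existence and uniqueness of the reduced representative in each orbit --- is the delicate part, since it replaces the elementary lattice-point counting of the definite cases with genuine reduction theory over a real quadratic field.
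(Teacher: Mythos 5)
One thing must be flagged before assessing your argument: the paper contains no proof of Theorem~\ref{thm:atkin} at all. The statement is imported verbatim, with attribution, from Atkin and Bernstein~\cite{Atkin04}, and the paper's own contribution begins only with how to evaluate $\N$ and $\N'$ efficiently on a Turing machine. So the only meaningful comparison is with the cited source, whose argument your outline in fact parallels: representation counts expressed as character-twisted divisor sums (equivalently, counting factorizations in the quadratic rings of discriminant $-4$, $-12$, and $12$), with squarefreeness giving $\sum_{d \mid k}\chi(d) = \prod_{p \mid k}\left(1+\chi(p)\right)$, a power of two that is odd exactly when $\omega(k)=1$. Your two definite cases are correct and essentially complete: since $k$ is odd, squarefree and greater than $1$, there are no axis solutions and no equal-coordinate solutions, the eight symmetries act freely, and the restricted count is exactly $r_2(k)/8 = \tfrac12\sum_{d\mid k}\chi_{-4}(d)$; the $x^2+3y^2$ case goes through the same way using $h(-12)=1$.

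The genuine gap is the third case, where you offer a plan rather than a proof, and steps (i) and (ii) of that plan are precisely where the content of the hyperbolic statement lives. For (i), note that the fundamental-domain claim must be made for the \emph{full} automorphism group, not just the proper automorphs generated by $(x,y)\mapsto(2x+y,\,3x+2y)$ (multiplication by $2+\sqrt{3}$) and $-1$: for $k=11$ the proper orbit of $(2,-1)$ is $\ldots,(5,-8),(2,-1),(3,4),(10,17),\ldots$ and never meets $\{x>y>0\}$, so without folding in the reflection $(x,y)\mapsto(x,-y)$ the claim "each orbit has exactly one reduced representative" is false as stated; with the reflection included one must prove both existence and uniqueness, and also that the reflection pairs proper orbits freely (no orbit is fixed), since that freeness is what lets you halve the orbit count in the parity computation --- this is the analogue of the "eight symmetries act freely" step and it is asserted, not proved. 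For (ii), the discriminant-$12$ divisor sum $\sum_{d \mid k}\chi_{12}(d)$ counts orbits of representations over the whole narrow class group, which has order two, with classes represented by $x^2-3y^2$ and $3x^2-y^2$; to attribute the entire sum to $3x^2-y^2$ you need the observation that $x^2-3y^2 \equiv 0,1 \pmod 3$ can never equal $k \equiv 11 \equiv 2 \pmod 3$. That is the specific role of the hypothesis $k \in 11+12\mathbb{N}$ (its mod-$4$ part then fixes $\chi_{12}(k)=1$ at primes), and it is absent from your sketch. Until (i) and (ii) are carried out, the hardest third of the theorem remains unproven.
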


We show how to compute $\N_{x^2+4y^2}(k)$ for all $k \in [1, n]$ in $\O(n \log^2 n/\log \log n)$ time. First, for each $x \in [1, n^{1/2}]$, one can enumerate a short list of $x^2 + 4\cdot 1^2, x^2 + 4\cdot 2^2, \ldots, x^2+4 \cdot(n^{1/2})^2$. Clearly, each short list is already sorted. Then, we merge short lists pairwisely until a single sorted list is obtained; therefore, the running time is $\O(n\log^2 n)$ because there are $\O(n)$ integers, each of which has $\O(\log n)$ bits and is encountered $\O(\log n)$ times in the merge process. 

To speed up this process by a factor of $\log\log n$, noted
in~\cite{Atkin04}, Atkin and Bernstein show that the integers on these short lists are seldom coprime to the first $\log^{1/2} n$ primes. There are $\O(n/\log \log n)$ such integers in total. One can speed up this process by screening out the integers on these short lists that are not coprime to the first $\log^{1/2} n$ primes. This filter step can be completed in $\O(n \log^{1/2} n \M(\log n))$ time and the reduced short lists can be merged in the desired time. The same technique can be applied to $\N_{x^2+3y^2}(k)$ and $\N'_{3x^2-y^2}(k)$ for all $k \in [1, n]$. 

\begin{lemma}\label{lem:semiprime}
Computing $\N_{x^2+4y^2}(k)$, $\N_{x^2+3y^2}(k)$ and $\N'_{3x^2-y^2}(k)$ for all $k$ in $[1, n]$ takes $\O(n\log^2 n/\log \log n)$ time on the Turing Machine.
\end{lemma}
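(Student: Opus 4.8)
The plan is to prove the bound for $\N_{x^2+4y^2}$ in full and to note that $\N_{x^2+3y^2}$ and $\N'_{3x^2-y^2}$ are handled identically. Since $\N_{x^2+4y^2}(k)$ is the parity of the number of pairs $(x,y)$ with $x^2+4y^2=k$, it suffices to determine, for each relevant $k\in[1,n]$, the multiplicity of $k$ among the values $x^2+4y^2$. First I would enumerate, for each $x\in[1,n^{1/2}]$, the already-sorted short list $x^2+4\cdot1^2, x^2+4\cdot2^2,\ldots$ truncated at $n$; across all $x$ this is $\O(n)$ entries of $\O(\log n)$ bits each, produced incrementally in negligible time. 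The naive finish merges these $\O(n^{1/2})$ sorted lists pairwise into one and reads off multiplicities; it costs $\O(n\log^2 n)$ because each of the $\O(n)$ entries is touched in $\O(\log n)$ merge rounds and carries $\O(\log n)$ bits. The whole game is to remove the extra $\log\log n$ factor.

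Second, I would insert a \emph{filter} before the merge. Using a precomputed table of the first $m=\log^{1/2}n$ primes (itself built in $\polylog n$ time), I discard every list entry whose value is divisible by one of these primes. Because divisibility is a property of the value $k$ and not of its representation, all copies of a given $k$ are kept or dropped together, so the multiplicities of the surviving values are unchanged. Testing each of the $\O(n)$ entries against the $m$ small primes costs $\O(n\log^{1/2}n\,\M(\log n))$, which is $o(n\log^2 n/\log\log n)$ and hence dominated. The surviving sublists remain sorted, so I can merge them exactly as before.

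The main obstacle, and the crux of the lemma, is to show that only $\O(n/\log\log n)$ entries survive the filter; this is the combinatorial input attributed to Atkin and Bernstein. I would establish it by equidistribution together with Mertens' theorem. Modulo a fixed odd prime $p$, the pairs $(x,y)$ with $p \mid x^2+4y^2$ have density $\approx 2/p$ when $-1$ is a quadratic residue mod $p$ (that is, $p\equiv1\pmod4$) and density $1/p^2$ otherwise, and by the Chinese Remainder Theorem the residues of $(x,y)$ modulo distinct primes $p\le p_m$ are independent. Since $\prod_{p\le p_m}p=n^{o(1)}\ll n^{1/2}$, the modulus is far smaller than the range of either variable, so boundary effects are negligible and the surviving density is $\prod_{p\le p_m}(1-c_p/p)$ with $c_p\approx2$ for $p\equiv1\pmod4$ and $c_p\approx0$ otherwise. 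The weight $c_p=2$ on the density-$1/2$ set of primes $\equiv1\pmod4$ offsets the loss, so this product has the same Mertens order $\Theta(1/\log p_m)$ as $\prod_{p\le p_m}(1-1/p)$, and since $\log p_m=\Theta(\log\log n)$ it equals $\Theta(1/\log\log n)$. Multiplying by the $\O(n)$ total entries gives the claimed $\O(n/\log\log n)$ survivors. I expect the equidistribution/boundary estimate to be the only nonroutine part; the rest is bookkeeping.

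Finally, with $\O(n/\log\log n)$ surviving entries, the pairwise merge over the $\O(n^{1/2})$ lists runs in $\O(\log n)$ rounds, each moving every survivor once at $\O(\log n)$ bits, for $\O((n/\log\log n)\log^2 n)=\O(n\log^2 n/\log\log n)$ total; the four-tape scheme of the footnote keeps the number of tapes constant. A scan of the final sorted list yields $\N_{x^2+4y^2}(k)$ for every surviving $k$, and any filtered value is divisible by a small prime, hence composite unless it equals one of the $m$ listed primes, so its primality is known outright and its $\N$-value is never needed by Theorem~\ref{thm:atkin}. The forms $x^2+3y^2$ (whose modular density is governed by whether $-3$ is a residue) and $3x^2-y^2$ (with the constraint $x>y>0$, giving lists sorted in the reverse order) admit the same enumeration, filtering, density estimate, and merge, so each is computed within $\O(n\log^2 n/\log\log n)$, completing the proof.
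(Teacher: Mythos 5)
Your proposal follows essentially the same route as the paper: enumerate the sorted short lists for each $x$, filter out entries divisible by the first $\log^{1/2} n$ primes at cost $\O(n\log^{1/2} n\,\M(\log n))$, and merge the $\O(n/\log\log n)$ surviving entries pairwise for $\O(n\log^2 n/\log\log n)$ total, noting that filtered values are known composites whose $\N$-values Theorem~\ref{thm:atkin} never needs. The only difference is that where the paper simply cites Atkin and Bernstein for the claim that only $\O(n/\log\log n)$ entries survive the filter, you sketch a Mertens-type equidistribution argument for it --- a reasonable filling-in of the cited fact rather than a genuinely different approach.
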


We computed the Atkin conditions but now we need to get rid of all non-squarefree numbers. Therefore, we show that generating all non-squarefree numbers requires $\O(n\log n)$ time in Lemma~\ref{lem:sfree}. Merging these three lists followed by screening out the list of non-squarefree numbers gives a prime table, as summarized in Theorem~\ref{thm:fast}.

\begin{lemma}\label{lem:sfree}
Generating a sorted list of all non-squarefree integers in the range $[1, n]$ takes $\O(n \log n)$
time on the Turing Machine.
\end{lemma}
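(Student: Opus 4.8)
The plan is to realize the set of non-squarefree integers in $[1,n]$ as the union, over primes $p \le \sqrt n$, of the arithmetic progressions $p^2, 2p^2, 3p^2, \dots$, and then to produce it in sorted order by merging these progressions. First I would obtain the primes up to $\sqrt n$ (a subproblem of size $\sqrt n$, solvable within the budget by recursion or even a crude sieve on the short range), and for each such $p$ generate the sorted list $L_p$ of multiples of $p^2$ up to $n$ by repeated addition of $p^2$; since $L_p$ has $\lfloor n/p^2\rfloor$ entries, each an $\O(\log n)$-bit integer produced by one addition, generating all the $L_p$ costs $\O(n\log n)$. The decisive arithmetic fact is that the total length of these lists is $\sum_{p\le\sqrt n}\lfloor n/p^2\rfloor \le n\sum_p 1/p^2 = \O(n)$, because the prime zeta series at $2$ converges. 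This is the source of the improvement over the Eratosthenes-style sieve, whose lists have total length $\Theta(n\log\log n)$.

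The obstacle is that there are $\Theta(\sqrt n/\log n)$ lists, so a naive balanced merge sends every element through $\Theta(\log n)$ rounds and costs $\O(n\log^2 n)$, a $\log n$ factor too much. The key idea is to merge with a \emph{size-biased} tree in which the list $L_p$ sits at depth $\O(\log p)$ rather than at the uniform depth $\Theta(\log n)$. Charging each element of $L_p$ its leaf depth, the total number of element moves becomes $\sum_{p\le\sqrt n}\frac{n}{p^2}\cdot\O(\log p)=\O\!\left(n\sum_p\frac{\log p}{p^2}\right)=\O(n)$, since $\sum_p(\log p)/p^2$ also converges; multiplying by the $\O(\log n)$ bits per integer gives the target $\O(n\log n)$. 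Intuitively, the few large lists (from $p=2,3,5,\dots$), which carry almost all of the elements, must be kept near the root.

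To build such a tree with a constant number of tapes I would group the primes geometrically: group $k$ consists of the primes whose index lies in $[2^k,2^{k+1})$, so it contains about $2^k$ lists, all of roughly equal length $\Theta(n/(2^{2k}k^2))$, and hence $\O(n/(2^k k^2))$ elements in total. Within each group I balanced-merge its $\approx 2^k$ lists into one sorted list $M_k$ at depth $\O(k)$, which costs $\O(n/(2^k k))$ moves; summing over $k$ this is $\O(n)$. The lengths $|M_k|$ decrease geometrically in $k$, so I can finish by merging the $M_k$ sequentially from the smallest (largest $k$) to the largest into a running accumulator: when $M_k$ is absorbed the accumulator already has size $\Theta(|M_k|)$, so that step costs $\O(|M_k|\log n)$ and the whole sweep costs $\O(n\log n)$. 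Each individual balanced or sequential merge is a standard two-tape merge and removes duplicates (integers divisible by two distinct prime squares) on the fly, so $\O(1)$ tapes suffice as in the footnote. The main thing to verify carefully is the bookkeeping of this merge schedule on the TM together with the two convergent estimates $\sum_p 1/p^2=\O(1)$ and $\sum_p(\log p)/p^2=\O(1)$ that make both phases cost only $\O(n)$ moves.
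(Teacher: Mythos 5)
Your proof is correct, but it takes a genuinely different route from the paper's. The paper splits the primes at the threshold $\log n$: for the $\O(\log n/\log\log n)$ primes $p < \log n$ it makes one pass per prime over an $n$-bit array, marking the multiples of $p^2$ with an amortized-$\O(1)$ counter trick, and then converts the marked array into a sorted list in $\O(n\log n)$ time; for the primes $p \ge \log n$ it merges the short lists completely naively, which is affordable precisely because only $\sum_{p\ge\log n} n/p^2 = \O(n/\log n)$ elements survive past the threshold, so even $\O(\log n)$ merge rounds cost only $\O(n\log n)$. You instead treat all primes uniformly by merging and recover the lost $\log n$ factor with a size-biased merge schedule that places $L_p$ at depth $\O(\log p)$, justified by the convergence of $\sum_p (\log p)/p^2$; note that the trivial bound $p_i \ge i$ on the $i$-th prime already makes your group-$k$ element counts decay geometrically, so you do not even need the prime number theorem estimate you invoke. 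The trade-off: the paper's bit-array phase keeps the merging phase entirely naive at the cost of an extra mechanism (array passes plus the frequency-division amortization), while your weighted merge tree is a single unified mechanism that dispenses with the amortization argument and generalizes to any family of lists whose weighted lengths are summable. Both proofs quietly require the primes up to $\sqrt n$ as a prerequisite; you handle this explicitly (recursion or a crude sieve well within budget), whereas the paper leaves it implicit.
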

\begin{proof}
We first generate the sorted list $L_1$ of all non-squarefree integers that has a divisor $p^2$ for some prime $p < \log n$.  We initialize an array of $n$ bits as zeros, for each prime $p < \log n$, we sequentially scan the entire array to mark all $mp^2$ for integer $m$ by counting down a counter from $p^2$ to $0$. Note that it requires amortized $\O(1)$ time to decrease down the counter by $1$ due to the frequency division principle~\cite{Berkovich00}. Since there are $\O(\log n/\log \log n)$ such primes, the running time of this step is $\O(n \log n/\log \log n)$. We then convert the array into the sorted list $L_1$ as required, which takes $\O(n \log n)$ time.  

Next, we generate a sorted list $L_2$ of all non-squarefree integers that has a divisor $p^2$ for some prime $p \ge \log n$. We generate a sorted short list for each such prime $p$, containing all the integers $mp^2 < n$ for some integer $m$. Then, we merge these sorted short lists. Note that there are $\sum_{p \ge \log n} n/p^2 = \O(n/\log n)$ integers on these short lists, each integer has $\O(\log n)$ bits, and each integer is encountered $\O(\log n)$ times in the merging process. The running time is thus $\O(n \log n)$. We are done by merging $L_1$ and $L_2$. 
\end{proof}

\begin{theorem}\label{thm:fast}
The prime table $T_n$ from $2$ to $n$ can be computed on the Turing Machine in time
$$\P(n) = \O(n\log^2n/\log\log n).$$
\end{theorem}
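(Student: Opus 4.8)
The plan is to assemble the prime table directly from the three parity functions of Lemma~\ref{lem:semiprime}, using Theorem~\ref{thm:atkin} to read off primality and Lemma~\ref{lem:sfree} to discard the numbers for which the Atkin--Bernstein characterization does not apply. First I would invoke Lemma~\ref{lem:semiprime} to produce, in $\O(n\log^2 n/\log\log n)$ time, the values $\N_{x^2+4y^2}(k)$, $\N_{x^2+3y^2}(k)$ and $\N'_{3x^2-y^2}(k)$ for every $k\in[1,n]$, kept as three sorted lists of the $k$ with parity $1$ (equivalently, three length-$n$ bit arrays). This is the dominant cost, and every remaining step will be shown to run in $\O(n\log n)$.

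The correctness hinges on a residue-class bookkeeping argument. Every prime $p>3$ is coprime to $6$, hence $p\equiv 1,5,7,11\pmod{12}$; I would check that $p\equiv1,5\pmod{12}$ lies in $1+4\mathbb{N}$, that $p\equiv1,7\pmod{12}$ lies in $1+6\mathbb{N}$, and that $p\equiv11\pmod{12}$ lies in $11+12\mathbb{N}$, so the three conditions of Theorem~\ref{thm:atkin} jointly cover all such $p$. Since a prime is squarefree, Theorem~\ref{thm:atkin} then guarantees that the appropriate parity function equals $1$ at $p$; conversely, for a squarefree composite $k$ in one of these classes the corresponding parity is $0$. Thus, after restricting each parity list to the residue class it governs and taking their union, I obtain a sorted candidate list $C$ whose squarefree members are exactly the primes $p>3$.

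Next I would remove the spurious, non-squarefree candidates, which is precisely where Lemma~\ref{lem:sfree} enters: generate the sorted list of non-squarefree integers in $[1,n]$ in $\O(n\log n)$ time and compute the set difference $C\setminus(\text{non-squarefree})$ by a single linear merge of two already-sorted lists. Finally I would adjoin $2$ and $3$ by hand. Each filtering, union, and difference step touches $\O(n)$ integers of $\O(\log n)$ bits a constant number of times---only $\O(1)$ pairwise merges of fully sorted lists are needed, not a merge tree---so the assembly costs $\O(n\log n)$ and the total is dominated by the $\O(n\log^2 n/\log\log n)$ of Lemma~\ref{lem:semiprime}.

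The routine parts are the two lemmas, which I may assume. The main thing to get right is the combination step: verifying that the three residue classes together capture every prime above $3$ while the squarefree screening kills every false positive, and arranging the merging and screening so that they remain linear-times-$\log n$ rather than accidentally incurring the extra $\log n$ factor of a balanced merge tree, which would wipe out the $\log\log n$ savings.
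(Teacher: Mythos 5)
Your proposal is correct and follows exactly the paper's (very terse) argument for Theorem~\ref{thm:fast}: compute the three parity functions via Lemma~\ref{lem:semiprime}, merge, and screen out non-squarefree integers via Lemma~\ref{lem:sfree}, with everything after the sieve step costing only $\O(n\log n)$. The residue-class bookkeeping (covering all primes $p>3$ by $1+4\mathbb{N}$, $1+6\mathbb{N}$, $11+12\mathbb{N}$ and adjoining $2$ and $3$ by hand) and the insistence on $\O(1)$ linear merges are details the paper leaves implicit, and you have filled them in correctly.
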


\subsection{Atkin's Sieve in $\O(\M(n \log n))$}

We show that sieve of Atkin can be realized in $\O(\M(n \log n))$ time on the Turing Machine. We apply multiplication to the computation of $N_{f(x, y)}(k)$ and $N'_{f(x, y)}(k)$ for all $k \in [1, n]$.  
The balance of the work will take $\O(n\log n)$, and will thus be dominated by the multiplication.

An important aspect of the multiplication will be the number of bits needed in the multiplicands.  For this, we need Lemma~\ref{lem:mass}, stating an upper bound of the number of (integer) lattice points on the ellipses specified by the first two Atkin conditions and on the truncated hyperbola $3x^2 - y^2 = k$ for $x > y > 0$. 

\begin{lemma}\label{lem:mass}
The number of integer pairs $(x, y)$ that satisfy $x^2 + 4y^2 = k$ for any positive integer $k$ coprime to $6$ is bounded by $k^{\O(1/\log\log k)}$. The same bound holds for $x^2+3y^2 = k$ and $3x^2-y^2 = k, x > y > 0$.
\end{lemma}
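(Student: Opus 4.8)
The plan is to reduce each of the three representation counts to the classical divisor function $d(k)$ (the number of positive divisors of $k$) and then invoke the maximal-order bound $d(k)=k^{\O(1/\log\log k)}$ (Wigert's theorem: $\log d(k)\le(1+o(1))\log 2\cdot\log k/\log\log k$). Thus it suffices to prove that the number of integer pairs counted by each form is $\O(d(k))$, with an absolute implied constant, since the fixed coprimality of $k$ to $6$ keeps us away from the ramified primes $2$ and $3$ of all three forms and lets us count in the maximal order of each associated quadratic field.

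For the positive definite forms I would pass to the relevant quadratic ring and count factorizations. For $x^2+4y^2=k$ with $k$ odd, write $k=x^2+(2y)^2$; since $k$ is odd, in every representation of $k$ as an ordered sum of two squares exactly one summand is even, and that summand is forced to be $2y$. Hence such representations correspond, up to sign, to representations of $k$ as a sum of two squares, whose number is $r_2(k)=4\sum_{d\mid k}\chi_{-4}(d)\le 4\,d(k)$. Equivalently, each representation gives a factorization $k=(x+2yi)(x-2yi)$ in $\mathbb{Z}[i]$, and the number of Gaussian divisors of $k$ is $\O(d(k))$. For $x^2+3y^2=k$ the same argument runs in $\mathbb{Z}[\sqrt{-3}]$, an order inside the Eisenstein integers $\mathbb{Z}[\omega]$: a representation gives a factorization $k=(x+y\sqrt{-3})(x-y\sqrt{-3})$, and because $k$ is coprime to $6$ (hence to the conductor) the number of such factorizations is bounded by the number of ideals of norm $k$ in the maximal order, again $\O(d(k))$.

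The delicate case, and the one I expect to be the main obstacle, is the indefinite form $3x^2-y^2=k$ under the truncation $x>y>0$, since an indefinite form has an infinite automorphism group and counting lattice points on the full hyperbola would diverge. Two observations tame this. First, the truncation bounds the arc: from $0<y<x$ and $k=3x^2-y^2$ one gets $2x^2<k<3x^2$, so $\sqrt{k/3}<x<\sqrt{k/2}$ and $y=\Theta(\sqrt k)$, confining the admissible points to a short arc on which both coordinates vary by only a bounded ratio. Second, in $\mathbb{Z}[\sqrt3]$ a solution is an element $y+x\sqrt3$ of norm $y^2-3x^2=-k$, and all such elements form finitely many orbits under the unit group $\langle -1,\,2+\sqrt3\rangle$ (every unit has norm $+1$, as $a^2-3b^2=-1$ is insoluble mod $3$). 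Multiplying by the fundamental unit scales the two real embeddings by $2+\sqrt3\approx3.73$ and $2-\sqrt3\approx0.27$, so at most $\O(1)$ members of each orbit land on the bounded arc where both embeddings have size $\Theta(\sqrt k)$; and the number of orbits is at most the class number of the order times the number of ideals of norm $k$, i.e.\ $\O(d(k))$.

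Combining the three estimates, each representation count is $\O(d(k))$, and substituting $d(k)=k^{\O(1/\log\log k)}$ finishes the proof. The step needing the most care is making the orbit/truncation argument for $3x^2-y^2=k$ fully rigorous: verifying that the short arc meets each unit orbit in $\O(1)$ points, and confirming that imprimitive representations (those with a factor common to $k$) do not inflate the count beyond $\O(d(k))$.
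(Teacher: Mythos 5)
Your proof is correct, but for the hard case---the truncated hyperbola $3x^2-y^2=k$---it takes a genuinely different route from the paper. For the two definite forms your argument and the paper's coincide in substance: both reduce $x^2+4y^2=k$ to sums of two squares and invoke a classical divisor-sum formula for the representation count ($r_2(k)=4\sum_{d\mid k}\chi_{-4}(d)$ in your write-up, the Grosswald and Heaslet citations in the paper), then apply Wigert's bound $d(k)=k^{\O(1/\log\log k)}$. For the hyperbola, the paper stays elementary: assuming first that $x,y,k$ are pairwise coprime, it maps each solution to the residue $yx^{-1}\bmod k$, uses $x,y<\sqrt k$ to show this map is injective, observes that every such residue is a square root of $3$ modulo $k$, and bounds the number of such square roots by $2^{\omega(k)}$ via the Chinese Remainder Theorem and the two-roots-per-prime-power fact cited from LeVeque; imprimitive solutions are then handled by dividing out the common factor, at the cost of squaring the bound (harmless here). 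You instead run the standard algebraic-number-theory argument in $\b{Z}[\sqrt3]$: solutions are elements of norm $-k$, unit orbits inject into ideals of norm $k$ (at most $d(k)$ of them since $\gcd(k,12)=1$), and the truncation $x>y>0$ forces $\sigma_1(\alpha)=y+x\sqrt3$ into the window $(\sqrt k,2\sqrt k)$, whose multiplicative width is smaller than the fundamental unit $2+\sqrt3$, so each orbit contributes $\O(1)$ solutions. Both arguments are sound; the paper's is self-contained at the level of congruences, while yours generalizes more readily and---a point you seem not to have noticed---already disposes of your closing worry: the ideal count $\sum_{d\mid k}\chi_{12}(d)\le d(k)$ counts \emph{all} ideals of norm $k$, so imprimitive representations are included automatically and no separate divide-out step is needed. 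Two small slips to repair: your claim $y=\Theta(\sqrt k)$ is false ($y$ can be as small as $1$, e.g.\ when $k=3x^2-1$), but it is also unnecessary, since $x\in(\sqrt{k/3},\sqrt{k/2})$ and $0<y<x$ alone give $y+x\sqrt3\in(\sqrt k,2\sqrt k)$; and the number of orbits is simply bounded by the number of (principal) ideals of norm $k$---the factor of the class number is superfluous, though harmless as an upper bound since $\b{Q}(\sqrt3)$ has class number~$1$.
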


\begin{proof}
Observe that every pair $(x, y)$ that satisfies $x^2 + 4y^2 = k$ induces an unique pair $(x' = x, y' = 2y)$ that satisfies ${x'}^2 + {y'}^2 = k$. Therefore, the number of pairs $(x, y)$ that satisfies the latter equation is no less than that of the former. It is known that, for any odd integer $k$, there are 
\begin{equation} \label{eqn:circle}
\O\left(\sum_{d \mid k} (-1)^{(d-1)/2}\right)
\end{equation}
integer pairs $(x', y')$ that satisfy ${x'}^2+{y'}^2 = k$~\cite{Grosswald85}. Since the number of divisors of an integer $k$ is no more than
$
\O\left(k^{1/\log \log k}\right)
$
due to Wigert~\cite{Dickson05}, an upper bound for~\eqref{eqn:circle} is $\O(k^{1/\log \log k})$. Similarly, it is known that for any odd integer $k$ there are
\begin{equation} \label{eqn:ellipse}
\O\left(\sum_{d \mid k} \left(\dfrac{-3}{d}\right) \right) 
\end{equation}
integer pairs $(x, y)$ that satisfy $x^2 + 3y^2 = k$~\cite{Heaslet39}, where $\left(\frac{a}{b}\right)$ denotes the Jacobi symbol. Because each Jacobi symbol has value no more than 1, an upper bound for~\eqref{eqn:ellipse} is $\O(k^{1/\log \log k})$ as desired.

We argue that, for any integer $k$ coprime to $6$, the number of integer pairs $(x, y)$ that satisfy equation $3x^2 - y^2 = k, x > y > 0$ has the same bound. We first give a proof for the case that $x, y, k$ are mutually relatively primes and then relax the restriction. 

Let $k = p_1^{r_1} p_2^{r_2} \cdots p_t^{r_t}$ where the $p_i$'s are distinct primes more than 3 and the $r_i$'s are positive integers. Observe that every integer pair $(x, y)$ that satisfy $3x^2 - y^2 = k, x > y > 0$ has the property that $x, y < k^{1/2}$. Therefore, every integer pair $(x, y)$ that satisfy $3x^2 - y^2 = k, x > y > 0$ induces an unique pair $(x' \equiv x \bmod k, y' \equiv x \bmod k)$ that satisfies $3x'^2-y'^2 \equiv 0 \pmod{k}$ as well as induces a pair $(x' \equiv x \bmod p_i^{r_i}, y' \equiv y \bmod p_i^{r_i})$ that satisfies $3x'^2 - y'^2 \equiv 0 \pmod{p_i^{r_i}}$.

We claim that any integer pair $(x, y)$ that satisfies $3x^2 - y^2 \equiv 0 \pmod{k}$ has an unique product $(yx^{-1} \bmod{k})$, where the inverse $x^{-1}$ exists since $x$ and $k$ are relatively prime. We give a proof by contradiction. Suppose $(x_1, y_1)$ and $(x_2, y_2)$ yield the same product $(yx^{-1} \bmod{k})$, then $y_1x_2 \equiv y_2x_1 \pmod{k}$ or, equivalently, $y_1x_2 = y_2x_1$ due to $x_1, y_1, x_2, y_2 < k^{1/2}$. Since $x_1$ and $y_1$ are relatively prime, and $x_2$ and  $y_2$ are relatively prime, then $x_1 = x_2$, $y_1 = y_2$, a contradiction. 

We show that the number of distinct products $(yx^{-1} \bmod{k})$ is at most $2^t$. Since $(x' \equiv x \bmod{p_i^{r_i}}, y' \equiv y \bmod{p_i^{r_i}})$ satisfies $3x'^2 - y'^2 \equiv 0 \pmod{p_i^{r_i}}$, $(a_i \equiv y'x'^{-1} \bmod{p_i^{r_i}})$ is a square root of $3$ modulo $p_i^{r_i}$. There are at most two distinct square roots of $3$ for each modulo $p_i^{r_i}$, $p_i > 3$~\cite[Theorem~5.2]{LeVeque02}. By the Chinese Remainder Theorem, $(a_1, a_2, \ldots, a_t)$ is in a one-to-one correspondence to $(yx^{-1} \bmod{k})$. Hence, there are at most $2^t$ distinct products $(yx^{-1} \bmod{k})$ as desired.

Consequently, the number of integer pairs $(x, y)$ that satisfy $3x^2 - y^2 = k, x > y > 0$ for any integer $k$ coprime to 6 is bounded by 
$$
\O\left(k^{1/\log\log k}\right) \mbox{ for } x, y, k \mbox{ are relatively primes.}
$$
For the case that two of $x, y, k$ have common divisor $d > 1$, then the third one also has the divisor $d$. Then, one can divide $x, y, k$ by the common divisor $d$, thus reducing to a case of $x, y, k'$ being mutually relatively prime for $k' < k$. There are $\O(k^{1/\log \log k})$ such smaller $k'$ and each smaller $k'$ contributes $\O(k^{1/\log \log k})$ pairs $(x, y)$ at most. We are done.
\end{proof}

\begin{lemma}\label{lem:ellipse}
Given a function $f(x, y) = ax^2 + by^2$ for $a > 0, b > 0$, $\N_{f(x, y)}(k)$ for all $k \in [1, n]$ can be computed in $\O(\M(n \log n))$ time.
\end{lemma}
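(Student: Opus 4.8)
The plan is to reduce the computation of all the values $\N_{f(x,y)}(k)$ to a single large integer multiplication via Kronecker substitution. The key observation is that $\N_{f(x,y)}(k)$ is exactly the parity of the coefficient of $z^k$ in the product of the two sparse polynomials
$$A(z) = \sum_{1 \le x \le \sqrt{n/a}} z^{ax^2}, \qquad B(z) = \sum_{1 \le y \le \sqrt{n/b}} z^{by^2},$$
since the coefficient of $z^k$ in $A(z)B(z)$ counts precisely the pairs $(x,y)$ with $x,y>0$ and $ax^2 + by^2 = k$. So it suffices to compute this product, truncated to degree $n$, and read off each coefficient modulo $2$.

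To turn the polynomial product into an integer product, I would encode $A$ and $B$ by evaluating them at $z = 2^b$ for a block width $b$ chosen large enough that no coefficient of $A(z)B(z)$ can overflow its $b$-bit window. By Lemma~\ref{lem:mass}, the number of representations of any $k \le n$ by $ax^2 + by^2$ is at most $n^{\O(1/\log\log n)}$ (the trivial bound $\O(\sqrt n)$, obtained because each $x$ forces at most one $y$, would already suffice for the stated running time), so taking $b = \Theta(\log n/\log\log n)$ guarantees that every coefficient fits and that the integer product $A(2^b)\cdot B(2^b)$ stores, in consecutive $b$-bit windows, exactly the convolution coefficients $c_k$. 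Reading the low-order bit of the $k$-th window then yields $c_k \bmod 2 = \N_{f(x,y)}(k)$.

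For the running time, each multiplicand is an integer with $\O(n)$ windows of $b$ bits, i.e. $\O(n\log n/\log\log n)$ bits. Building these integers is cheap: the exponents $ax^2$ (resp.\ $by^2$) are already sorted, so a single sweep over the windows $1,\dots,n$ writes a $1$-block precisely at those positions of the form $ax^2$, in $\O(nb)=\O(n\log n)$ time, and extracting the parities from the product is symmetric. Hence every step other than the multiplication costs $\O(n\log n)$, and the total is dominated by multiplying two $\O(n\log n/\log\log n)$-bit integers, which is $\O(\M(n\log n/\log\log n)) = \O(\M(n\log n))$ by the monotonicity of $\M$.

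The main obstacle is guaranteeing correctness of the Kronecker encoding: the windows must be wide enough that the carry out of one coefficient never spills into a neighboring window. This is exactly where the representation bound of Lemma~\ref{lem:mass} enters, pinning down the choice of $b$; the remaining steps are routine bookkeeping whose cost is subsumed by the multiplication.
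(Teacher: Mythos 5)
Your proposal is correct and follows essentially the same route as the paper: encode the sets $\{ax^2\}$ and $\{by^2\}$ as sparse polynomials, multiply via Kronecker substitution at base $2^B$, and read off the parity of each coefficient window. One caution: your headline choice $b = \Theta(\log n/\log\log n)$ justified by Lemma~\ref{lem:mass} is a misapplication, since that lemma covers only the specific forms $x^2+4y^2$, $x^2+3y^2$, $3x^2-y^2$ and only for $k$ coprime to $6$ (indeed, the paper reserves this optimization for Corollary~\ref{cor:ellipse}, where it must explicitly exclude such $k$ by splitting into residue classes modulo $6$). Your parenthetical fallback --- the trivial bound of $\O(\sqrt{n})$ representations, hence windows of width $\Theta(\log n)$ and a multiplication of $\O(n\log n)$-bit integers --- is exactly the paper's argument and suffices for the claimed $\O(\M(n\log n))$ bound.
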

\begin{proof}
Any positive integer pair $(x, y)$ that satisfies $f(x, y) = k$ has the property that $ax^2, by^2 < k$. We claim that a long multiplication on a pair of $\O(n \log n)$-bit integers suffices to compute $N_{f(x, y)}(k)$ for all $k \in [1, n]$.

For $i \in [1, n]$, let $\alpha_i = 1$ if some $ax^2 = i$, or otherwise $\alpha_i = 0$. Similarly, for $j \in [1, n]$, let $\beta_j = 1$ if some $by^2 = j$, or otherwise $\beta_j = 0$. Then, the following product of polynomials
\begin{equation*}
\sum_{i = [1, n]} \alpha_i z^i \sum_{j \in [1, n]} \beta_j z^j
\end{equation*}
has the property that the coefficient of $z^k$ modulo $2$ is equal to $\N_{f(x, y)}(k)$. One can use a multiplication to replace the product of polynomials by replacing $z$ with an integer base $B$. To avoid carry issue, we choose $B = \Theta(\log n)$ because the coefficient of $z^k$ is at least bounded by $\O(n^2)$. Thus, the running time is $\O(\M(n\log n)).$  
\end{proof}

\begin{corollary}\label{cor:ellipse}
Given functions $f(x, y) = x^2 + 4y^2, g(x, y) = x^2 + 3y^2$, $\N_{f(x, y)}(k)$ and $\N_{g(x, y)}(k)$ for all $k \in [1, n]$ can be computed in $\O(\M(n\log n/\log \log n))$ time. 
\end{corollary}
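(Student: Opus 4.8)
The plan is to re-run the polynomial-multiplication argument of Lemma~\ref{lem:ellipse}, but to exploit the fact that for the two specific forms $f$ and $g$ the coefficients of the product polynomial are far smaller than the generic bound used there, so that fewer bits per coefficient are needed. For $f(x,y)=x^2+4y^2$ I would set $\alpha_i=1$ exactly when $i$ is a perfect square and $\beta_j=1$ exactly when $j$ equals $4y^2$ for some $y$, so that the coefficient of $z^k$ in $\bigl(\sum_{i}\alpha_i z^i\bigr)\bigl(\sum_{j}\beta_j z^j\bigr)$ counts precisely the pairs $(x,y)$ with $x,y>0$ and $f(x,y)=k$, whose parity is $\N_{f(x,y)}(k)$; the encoding for $g(x,y)=x^2+3y^2$ is identical, with $4y^2$ replaced by $3y^2$.

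The crux is to bound this coefficient \emph{uniformly} over all $k\in[1,n]$, not merely for $k$ coprime to $6$. Writing $d(k)$ for the number of divisors of $k$, the map $(x,y)\mapsto(x,2y)$ injects the representations counted for $f$ into representations of $k$ as an ordered sum of two squares, whose number is $\O(d(k))$ for every $k$: this is~\eqref{eqn:circle} for odd $k$, and Jacobi's two-square formula gives the same bound for even $k$. The form $g$ is bounded identically through the divisor sum $\sum_{d\mid k}\bigl(\tfrac{-3}{d}\bigr)$ of~\eqref{eqn:ellipse}, whose magnitude is likewise $\O(d(k))$ for all $k$. Invoking Wigert's bound $d(k)=k^{\O(1/\log\log k)}$, which carries no coprimality hypothesis, every coefficient is at most $2^{\O(\log n/\log\log n)}$ for all $k\le n$.

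With this uniform bound in hand I would let each digit of the base hold $B=\Theta(\log n/\log\log n)$ bits---just enough for one coefficient---and evaluate both polynomials at $z=2^{B}$, so that no digit carries into its neighbour and every coefficient can be recovered modulo $2$ afterwards by a single $\O(n)$-time scan. Each polynomial has $\O(n)$ coefficients of $B$ bits, so the two multiplicands have $\O(nB)=\O(n\log n/\log\log n)$ bits, and the single long multiplication runs in $\O(\M(n\log n/\log\log n))$ time, which dominates the coefficient extraction and yields the claimed bound for both $f$ and $g$.

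I expect the main obstacle to be precisely this uniformity. The multiplication necessarily produces the coefficient of $z^k$ for \emph{every} $k$, including the residue classes irrelevant to Atkin's test, so a single oversized ``don't-care'' coefficient would carry into and corrupt a neighbouring coefficient we do need; this is why the estimate must hold for all $k\le n$ rather than only for the coprime-to-$6$ case of Lemma~\ref{lem:mass}. Establishing the $\O(d(k))$ representation count in the excluded residue classes is the one point where the argument reaches slightly past a verbatim citation of Lemma~\ref{lem:mass}; once it is in place, the bit count, and hence the running time, follow mechanically.
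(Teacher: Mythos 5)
Your proposal is correct, but it resolves the central difficulty differently from the paper. Both proofs start from Lemma~\ref{lem:ellipse} and shrink the digit size to $B=\Theta(\log n/\log\log n)$; the question is what to do about values of $k$ not coprime to $6$, for which Lemma~\ref{lem:mass} gives no coefficient bound. The paper \emph{avoids} those $k$ entirely: it classifies the terms $x^2$, $4y^2$, $3y^2$ by residue modulo $6$ and performs a separate multiplication for each pair of residue classes whose sum is coprime to $6$, so every exponent appearing in any product is automatically in a ``good'' residue class and Lemma~\ref{lem:mass} applies verbatim; the cost is a constant-factor blowup (at most $36$ multiplications) and the proof only ever produces $\N$ at arguments coprime to $6$. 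You instead \emph{strengthen the coefficient bound} so that no $k$ is bad: the injection $(x,y)\mapsto(x,2y)$ together with Jacobi's two-square formula (which, unlike the paper's citation~\eqref{eqn:circle}, carries no parity hypothesis) bounds the $x^2+4y^2$ count by $\O(d(k))$ for every $k$, and the analogous norm-form argument in $\b{Z}[\sqrt{-3}]\subset\b{Z}[\omega]$ does the same for $x^2+3y^2$; Wigert then gives the uniform bound $2^{\O(\log n/\log\log n)}$, so a single multiplication with no classification step suffices. What each buys: your route yields a simpler algorithm (one multiplication per form) and actually proves the corollary as literally stated --- $\N$ for \emph{all} $k\in[1,n]$, which the paper's proof does not --- at the price of number-theoretic facts beyond the paper's citations, which you correctly identify as the one step needing independent justification (and which is indeed true). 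The paper's route stays strictly inside Lemma~\ref{lem:mass}, and its mod-$6$ classification trick is the one that gets reused for the hyperbola in Lemma~\ref{lem:hyperbola}, where a uniform bound of your kind is not available (the lattice-point bound for $3x^2-y^2=k$ genuinely uses coprimality to $6$).
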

\begin{proof}
We use the algorithm stated in Lemma~\ref{lem:ellipse} but, due to Lemma~\ref{lem:mass}, we can choose $B$ to be $\Theta(\log n/\log \log n)$ rather than $\Theta(\log n)$. One needs to avoid the computation of $\N_{f(x, y)}(k)$ for $k$ not coprime to $6$ because $\N_{f(x, y)}(k)$ might require more than $\Theta(\log n/\log \log n)$ bits for such $k$. We avoid the computation of $\N_{f(x, y)}(k)$ for such $k$ by classifying $x^2$, $4y^2$, $3y^2$ into groups according to their residue modulo $6$. Then, multiplying these groups in pairs only if their sum is coprime to 6, which amplifies the complexity by a constant factor. 
\end{proof}

\begin{lemma}\label{lem:hyperbola}
Given a function $f(x, y) = 3x^2 - y^2$, $\N'_{f(x, y)} (k)$ for all $k \in [1, n]$ can be computed in $\O(\M(n \log n))$ time. 
\end{lemma}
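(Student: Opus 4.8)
The plan is to mirror the ellipse construction of Lemma~\ref{lem:ellipse}, reducing the computation of $\N'_{3x^2-y^2}(k)$ for all $k \in [1,n]$ to a single large integer multiplication, but with two new features forced by the hyperbola: the \emph{subtraction} $3x^2-y^2$ and the \emph{ordering} $x>y$. First I would record the range bound: if $3x^2-y^2=k$ with $x>y>0$ then $k>2y^2$ and $3x^2<\tfrac{3}{2}k$, so $x,y<\sqrt{k/2}\le\sqrt{n/2}$. Hence every pair that $\N'$ must count lies in the box $[1,\sqrt{n/2})^2$, and the whole computation is finite --- it is precisely the ordering that guarantees this.

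To realize the subtraction by a product, I would reverse one multiplicand. Put $\alpha_i=1$ iff $i=3x^2$ for some admissible $x$ and $\beta_j=1$ iff $j=y^2$ for some admissible $y$, let $N$ be the largest exponent, and form $A(z)=\sum_i\alpha_iz^i$ and $\tilde B(z)=\sum_j\beta_jz^{N-j}$. Then the coefficient of $z^{N+k}$ in $A(z)\tilde B(z)$ equals $\#\{(x,y): 3x^2-y^2=k\}$, and reading it modulo $2$ would give the desired parity --- were it not for the ordering. As in Lemma~\ref{lem:ellipse} I would replace $z$ by an integer base $B=\Theta(\log n)$; the coefficients are bounded (crudely by $\O(n^2)$, and far more tightly by Lemma~\ref{lem:mass}), so there is no carry, both multiplicands have $\O(n\log n)$ bits, and one multiplication costs $\O(\M(n\log n))$.

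The ordering $x>y$ is the crux, and it is exactly what is absent in the ellipse case. The product above counts all pairs with $x,y\ge 1$, i.e.\ $C(k)+D(k)$, where $C$ is the wanted count ($x>y$) and $D$ counts $x<y$ (there is no $x=y$, since $k$ is coprime to $6$); and $D$ may not simply be discarded, since inside the box it is nonzero and depends on $n$. Observing that $x>y\iff 3x^2>3y^2$, I would impose the inequality by divide-and-conquer on the common range $[1,\sqrt{n/2})$ of the two variables: split at the midpoint, count the pairs whose $x$ lies in the upper half and $y$ in the lower half by an \emph{unconstrained} reversal-product on the two sub-ranges (these automatically satisfy $x>y$), drop the pairs with $x$ low and $y$ high (these have $x<y$), and recurse on the two aligned halves. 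Each admissible pair is then counted exactly once.

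The step I expect to be the main obstacle is bounding the total cost of these products by $\O(\M(n\log n))$ rather than a $\log n$ factor more. The favorable structure is that a sub-range of width $w$ with top value about $v$ produces exponents $3x^2$ of span only $\O(wv)$, so after shifting each cross-product is a multiplication of size $\O(wv\log n)$; organizing the recursion into dyadic blocks of $x$, the ``clearly separated'' cross-products have sizes growing geometrically and, by the assumed monotonicity (superadditivity) of $\M(n)/n$, their costs sum to $\O(\M(n\log n))$. The delicate residue is the near-diagonal pairs, where $x$ and $y$ fall in the same dyadic block and the separating recursion becomes self-similar; arguing that this part does not incur an extra logarithmic factor is where the real work lies, and it is the point at which the short-representation bound of Lemma~\ref{lem:mass} is most valuable, since it keeps the base, and hence every multiplicand, as small as possible. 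The remaining bookkeeping --- generating the square lists by the increments $(x+1)^2=x^2+2x+1$, screening the residues $k$ coprime to $6$, and extracting coefficients modulo $2$ --- is $\O(n\log n)$ and is dominated.
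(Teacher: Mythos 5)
Your setup --- the range bound $x,y < n^{1/2}$, realizing the subtraction $3x^2-y^2$ by reversing one multiplicand (a step the paper leaves implicit), and splitting the triangle $0 < y < x < n^{1/2}$ at the midpoint so that the cross case ($x$ in the upper half, $y$ in the lower half) automatically satisfies $x>y$ and reduces to an unconstrained product --- is exactly the paper's decomposition. But your proposal has a genuine gap at precisely the point you flag as ``where the real work lies'': the near-diagonal pairs. The mechanism you hope for (geometrically growing multiplication sizes plus superadditivity of $\M$) does not exist. At recursion level $i$ the $2^i$ blocks have width $n^{1/2}/2^i$, and a block with left endpoint $a$ produces exponents $3x^2$ spanning $\Theta(a\,n^{1/2}/2^i)$ integers; summing over the blocks at that level gives $\Theta(n)$ coefficients at \emph{every} level, with no decay whatsoever. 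So each level costs up to $\O(\M(n\log n/\log\log n))$ even with the small base that Lemma~\ref{lem:mass} permits, and running the recursion to its natural depth $\Theta(\log n)$ yields only the bound $\O(\M(n\log n)\log n/\log\log n)$ --- short of the claim by nearly a $\log n$ factor.

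The paper closes this gap with a hybrid argument that your proposal is missing: stop the divide and conquer early and finish by sorting. Concretely, recurse for only $\log\log n$ levels; by superadditivity these levels cost $\O(\log\log n \cdot \M(n\log n/\log\log n)) = \O(\M(n\log n))$, and this is where Lemma~\ref{lem:mass} genuinely matters --- it shrinks the base to $\Theta(\log n/\log\log n)$, without which even the truncated recursion would cost $\Theta(\M(n\log n)\log\log n)$. Since each level halves the number of surviving pairs, after $\log\log n$ levels only $\O(n/\log n)$ near-diagonal pairs remain, and these are handled with no multiplication at all: enumerate them explicitly and merge them into a list sorted by $f(x,y)$ as in the second algorithm of Lemma~\ref{lem:sfree} ($\O(n/\log n)$ integers, $\O(\log n)$ bits each, $\O(\log n)$ merge rounds, hence $\O(n\log n)$ time), then combine with the $\O(\log n)$ sorted outputs of the multiplications via the bit-array algorithm of Lemma~\ref{lem:sfree}, discarding duplicates since only parity matters. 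Without this truncate-and-sort step, the divide and conquer you describe does not reach $\O(\M(n\log n))$.
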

\begin{proof}
Any positive integer pair $(x, y)$ that satisfies $f(x, y) = k$ and $x > y$ has the property that $x, y < k^{1/2}$. We claim that $\log n$ multiplications suffice to compute $N'_{f(x, y)}(k)$ for all $k \in [1, n]$. 

We relax the condition $x > y$ by divide and conquer and then process each subproblem as Lemma~\ref{lem:ellipse}. 
We reduce the range of pairs $(x, y)$, $0 < y < x < n^{1/2}$ to following three cases, let $h = n^{1/2}/2$:
(1) $x \in [h, n^{1/2}]$ and $y \in [0, h)$,
(2) $0 < y < x < h$,
(3) $h \le y < x < n^{1/2}$.

Note that case (1) can be computed by the product of $n$-term polynomial as what was done in Corollary~\ref{cor:ellipse} due to Lemma~\ref{lem:mass}. Therefore, case (1) can be done in $\O(\M(n \log n/\log \log n))$ time. Besides, the number of pairs $(x, y)$ in cases (2) and (3) is half of that in the original case. To match the claimed complexity, we recurse for $\log \log n$ levels, with a running time of $\O(\M(n \log n))$
and generate $\O(\log n)$ lists of pairs $(x, y)$ sorted in ascending $f(x, y)$ and we use the first algorithm in Lemma~\ref{lem:sfree} to merge them into a sorted list $L_1$ in $\O(n \log n)$ time. Note that, by the first algorithm, any pair of duplicated integers is discarded, since we only care about parity. After the recursion, the number of unprocessed pairs $(x, y)$ is $\O(n/\log n)$. We merge the unprocessed pairs $(x, y)$ into a single sorted list $L_2$ in ascending $f(x, y)$ by the second algorithm used in Lemma~\ref{lem:sfree}, which takes $\O(n \log n)$ time. Finally, we are done by merging $L_1$ and $L_2$. 
\end{proof}

Combining Lemma~\ref{lem:sfree},~\ref{lem:hyperbola} and Corollary~\ref{cor:ellipse}, we can realize the sieve of Atkin with a few of long multiplications and some minor procedures doable in $\O(n \log n)$ time. As a result, we have Theorem~\ref{thm:main}. 

\begin{theorem}\label{thm:main}
The prime table $T_n$ from $2$ to $n$ can be computed on the Turing Machine in time
$$\P(n) = \O(\M(n \log n)).$$
\end{theorem}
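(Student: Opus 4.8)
The plan is to assemble the table by combining the three Atkin parity functions with the squarefree filter, arguing that everything outside the three multiplications costs only $\O(n\log n)$. Since $\M(m) = \Omega(m)$ on a TM, this auxiliary cost is absorbed into $\O(\M(n\log n))$, yielding the claimed bound.

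First I would compute the parities. By Corollary~\ref{cor:ellipse} the values $\N_{x^2+4y^2}(k)$ and $\N_{x^2+3y^2}(k)$ for all $k$ coprime to $6$ in $[1,n]$ are available in $\O(\M(n\log n/\log\log n))$ time, and by Lemma~\ref{lem:hyperbola} the values $\N'_{3x^2-y^2}(k)$ for all $k \in [1,n]$ are available in $\O(\M(n\log n))$ time; these three computations dominate. I then interpret them through Theorem~\ref{thm:atkin}: for each of the progressions $k \equiv 1 \pmod 4$, $k \equiv 1 \pmod 6$, and $k \equiv 11 \pmod{12}$, a single scan over the parity values extracts the squarefree-candidate primes of that progression, namely the $k$ in the correct residue class whose parity is $1$.

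The correctness hinges on coverage and filtering. Every prime $p > 3$ is coprime to $12$, so $p \bmod 12 \in \{1,5,7,11\}$; since $\{1,5\}$ lie in $1+4\mathbb{N}$, $\{1,7\}$ lie in $1+6\mathbb{N}$, and $11$ lies in $11+12\mathbb{N}$, the three progressions together capture every prime above $3$, and every such prime is coprime to $6$, so restricting the ellipse computations to $k$ coprime to $6$ loses no prime; the primes $2$ and $3$ are inserted by hand. I take the union of the three extracted lists by pairwise merge and then, using Lemma~\ref{lem:sfree}, subtract the sorted list of all non-squarefree integers in $[1,n]$, both in $\O(n\log n)$ time. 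What remains is exactly the primes: each prime survives because it is squarefree and flagged in a covered progression, while no composite survives, since a surviving squarefree $k$ with parity $1$ in its progression would be forced prime by the iff of Theorem~\ref{thm:atkin}. Summing, the total is $\O(\M(n\log n)) + \O(n\log n) = \O(\M(n\log n))$.

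The step I expect to require the most care is this cover-and-filter correctness argument rather than any timing estimate: one must verify simultaneously that the three progressions miss no prime (the residue computation above), that discarding the $k$ not coprime to $6$ during the ellipse phase is harmless, and that the squarefree screen removes precisely those composites the parity tests admit. The running-time accounting is then immediate, since every assembly step is a sorted-list merge bounded by $\O(n\log n)$ and is dominated by the three multiplications.
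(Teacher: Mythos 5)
Your proposal is correct and follows essentially the same route as the paper: the paper also obtains Theorem~\ref{thm:main} by combining Corollary~\ref{cor:ellipse}, Lemma~\ref{lem:hyperbola}, and Lemma~\ref{lem:sfree} with Theorem~\ref{thm:atkin}, observing that the merging and squarefree-screening steps cost only $\O(n\log n)$ and are absorbed into $\O(\M(n\log n))$. In fact the paper states this combination in a single sentence, so your write-up — in particular the explicit residue-class coverage argument modulo $12$ and the check that skipping $k$ not coprime to $6$ is harmless — supplies details the paper leaves implicit.
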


\section{Lower Bound \label{sec:lower}}

We present a lower bound for computing the factorial $n!$. 
We do not restrict which operation can be used but we assume that the factorial $n!$ is output by a multiplication. We assume that a multiplication can only operate on two integers, each of which can be an integer of $o(n \log n)$ bits or a product computed by a multiplication.
Under this assumption, we show that computing the factorial $n!$ has a lower
bound $\Omega(\M(n \log^{4/7-\epsilon} n))$ for any constant $\epsilon
> 0$. 

To show the claimed lower bound, we need some lemmas for $\M(n)$ and
$\M_k(n)$, where $\M_k(n)$ denotes the optimal time to multiply $k$
pairs of two $n$-bit integers. There is a subtle difference between
$\M_k(n)$ and $k \M(n)$. $\M_k(n)$ denotes the optimal time to
multiply $k$ pairs of integers, possibly in parallel, because all
these integers are given at the beginning; however, $k\M(n)$ denotes
the optimal time to multiply $k$ pairs of integers serially, one
after another. Hence, $\M_k(n) \le
k\M(n)$. Lemmas~\ref{lem:simple},~\ref{lem:onemany} are simple facts about
the Turing Machine model. Lemma~\ref{lem:manyone} is based on the
property of progression-free
set~\cite{Erdos36,Behrend46,Roth53,Bourgain07,Elkin10}.

\begin{lemma}
\label{lem:simple}
$\M(a, b) = \Omega(a+b) \mbox{ and } \M(a, b) = \Omega(\M(a)) \mbox{ if } a \le b.$
\end{lemma}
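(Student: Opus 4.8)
The plan is to establish the two bounds separately, both via elementary arguments specific to the multitape Turing Machine.

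For the first bound $\M(a,b) = \Omega(a+b)$, I would argue that any correct multiplication algorithm must scan every bit of its two operands. If some input bit were never visited by any tape head, then flipping that bit would, for a generic choice of the remaining bits, change the product while leaving the machine's computation---and hence its output---unchanged, contradicting correctness. Since the two operands together occupy $a+b$ tape cells and a head advances by at most one cell per step, reading all of them requires $\Omega(a+b)$ steps. Equivalently, one may observe that for operands near $2^a$ and $2^b$ the product has $\Theta(a+b)$ bits, so merely writing the output already costs $\Omega(a+b)$.

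For the second bound $\M(a,b) = \Omega(\M(a))$ when $a \le b$, I would give a reduction establishing $\M(a) \le \M(a,b) + \O(b)$. Given the fastest $(a,b)$-multiplier, I compute the product of two $a$-bit numbers $u$ and $v$ by writing $b-a$ leading zeros onto $v$ to form a $b$-bit integer $v'$ with $v' = v$ numerically, then invoking the multiplier on the $a$-bit operand $u$ and the $b$-bit operand $v'$; its output equals $uv$. The padding writes $\O(b)$ symbols and thus costs $\O(b)$ time, while running the multiplier on this legal input costs at most its worst-case time $\M(a,b)$. To finish, I invoke the first bound to absorb the padding overhead: since $\M(a,b) = \Omega(a+b) = \Omega(b)$, the additive $\O(b)$ term is $\O(\M(a,b))$, giving $\M(a) = \O(\M(a,b))$, i.e. $\M(a,b) = \Omega(\M(a))$.

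The only point that needs care is the interaction between the ``optimal time'' in the definition of $\M$ and the reduction: I must ensure that feeding a zero-padded operand to the optimal $(a,b)$-multiplier is a valid input whose running time is bounded by $\M(a,b)$, which holds because $\M(a,b)$ is taken as the worst case over all $a$-bit by $b$-bit inputs. I do not expect a genuine obstacle here; the lemma is a collection of basic structural facts, and the main subtlety is simply confirming that the padding cost is dominated by, rather than additive to, the multiplication time.
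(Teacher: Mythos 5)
Your proposal is correct and takes essentially the same approach as the paper: the first bound comes from having to read every input character (you add the standard bit-flipping justification, which the paper omits), and the second comes from the identical zero-padding reduction, using $\M(a,b) = \Omega(a+b) = \Omega(b)$ to absorb the $\O(b)$ padding cost. The only cosmetic difference is that you state the reduction as a direct inequality $\M(a) \le \M(a,b) + \O(b)$, whereas the paper runs the same argument as a contradiction starting from $\M(a,b) = o(\M(a))$.
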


\begin{proof}
$\M(a, b) = \Omega(a+b)$ clearly holds on the Turing Machine model. 
To compute the product of two $a$-bit integers, every bits of the integers has to be read.
On a Turing Machine, one can read one character in a step. 
Since the alphabet set has constant size, every character can encode $\O(1)$ bits. 

We prove $\M(a, b) = \Omega(\M(a)) \mbox{ if } a \le b$ by padding zeros. Suppose $\M(a, b) = o(\M(a))$, then $b = o(\M(a))$. To multiply two $a$-bit integers, one can pad $b-a = o(\M(a))$ zeros to one $a$-bit integer and then multiply. In this way, the total running time is $\M(a, b) + o(\M(a)) = o(\M(a))$, contradicting the optimality of $\M(a)$. 
\end{proof}

\begin{lemma}
\label{lem:manyone}
The products of independent short multiplications can be computed by a long multiplication; in particular, 
$$\M_\ell (a) = \O(\M(k a)), $$
where $\log k^2 < a$ and $\ell = k^{1-\epsilon} \mbox{ for any constant } \epsilon > 0.$
\end{lemma}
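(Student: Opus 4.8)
The plan is to compute all $\ell$ independent products with a single long multiplication by \emph{packing}, in the spirit of Kronecker substitution, and to use a progression-free set to decide where to place the operands. Write the $\ell$ pairs as $(u_1,v_1),\dots,(u_\ell,v_\ell)$, with each $u_i,v_i$ an $a$-bit integer. The naive idea of forming $U=\sum_i u_i 2^{cs_i}$ and $V=\sum_i v_i 2^{cs_i}$ and then reading off $UV$ fails in general, because the coefficient of $2^{cT}$ collects \emph{every} cross term $u_iv_j$ with $s_i+s_j=T$, not just the diagonal term $u_iv_i$. The key idea is to choose the positions $s_1<\cdots<s_\ell$ from a \emph{progression-free} set $S\subseteq[0,k)$, i.e., a set containing no three-term arithmetic progression, which exists in near-optimal density by Behrend's construction~\cite{Behrend46}.

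First I would fix a slot width $c=\Theta(a)$ large enough that the entire contribution landing at any position $T=s_i+s_j$ fits within $c$ bits. Each such $T$ receives at most $\ell$ cross terms (for each $i$ there is at most one $j$ with $s_j=T-s_i$), each bounded by $2^{2a}$, so the accumulated value is below $2^{2a+\log\ell}$; taking $c=2a+\lceil\log\ell\rceil+1$ then guarantees that no carry propagates from one slot into the next. Here the hypothesis $\log k^2<a$ is exactly what forces $\log\ell\le\log k<a/2$ and hence $c=\Theta(a)$. Because every slot now holds its coefficient cleanly, $UV$ is literally the base-$2^{c}$ integer whose digit in position $T$ equals $\sum_{s_i+s_j=T}u_iv_j$.

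Second, I would argue that the diagonal products are isolated. The product $u_iv_i$ sits in slot $T=2s_i$. If some other pair $(j,m)\neq(i,i)$ satisfied $s_j+s_m=2s_i$, then either $j=m$, forcing $s_j=s_i$ and hence $j=i$, or $j\neq m$, in which case $s_j,s_i,s_m$ would be three distinct elements of $S$ in arithmetic progression, contradicting progression-freeness. Thus slot $2s_i$ contains exactly $u_iv_i$, and reading the $2a$ bits starting at bit position $2cs_i$ of $UV$ recovers the $i$-th product. Packing $U,V$ and unpacking the $\ell$ answers cost $\O(ak)$ time.

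Finally I would bound the operand size. Since $S\subseteq[0,k)$, the operands $U,V$ have $\O(ck)=\O(ak)$ bits, so the single multiplication costs $\O(\M(ak))$, and by the monotonicity of $\M(n)/n$ this dominates the $\O(ak)$ packing overhead, giving $\M_\ell(a)=\O(\M(ak))$. The point where all the constraints collide — and the step I expect to be the real crux — is the choice of $S$: I need a progression-free set of size $\ell=k^{1-\epsilon}$ that still lives inside the short range $[0,k)$, since the range directly controls the operand length. Behrend's bound supplies progression-free subsets of $[0,k)$ of size $k^{1-o(1)}$, which for any fixed $\epsilon>0$ exceeds $k^{1-\epsilon}$ once $k$ is large enough; this is precisely what simultaneously yields $\ell$ packed products and keeps the operands at $\O(ak)$ bits.
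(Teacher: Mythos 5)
Your proposal is correct and follows essentially the same route as the paper's proof: pack the $\ell$ operand pairs into two long integers at positions drawn from a Behrend progression-free set so that each diagonal product $u_iv_i$ lands in an isolated slot, then perform a single multiplication of $\O(ka)$-bit integers. The only real difference is bookkeeping of carries: you widen each slot to $2a+\lceil\log\ell\rceil+1$ bits so every accumulated coefficient fits cleanly, whereas the paper keeps $a$-bit slots and instead restricts the assigned indices to a single parity class (losing a factor of $2$ by pigeonhole) under the hypothesis $\log k^2 < a$; your variant is, if anything, the cleaner treatment.
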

\begin{proof}
We represent a $ka$-bit integer $A$ with a sum of $k$ terms
$$
A = \alpha_0 + \alpha_1 x^1 + \alpha_2 x^2 + \cdots + \alpha_{k-1} x^{k-1} \mbox{ where the base } x = 2^a 
$$
and likewise for $B$. We initialize $\alpha_i$'s and $\beta$'s with
zeros. For each short multiplication $u \times v$, we assign $\alpha_i
= u$ and $\beta_i = v$ for some index $i$, preserving the following
condition. We require that the set $S$ of assigned indices be
progression-free; that is, for every $i, j, h \in S$, $j+j \ne 2h$. In
this way, if we do the multiplication $C = AB$, then the product of
matched $u, v$ is placed at the coefficient of $x^{2i}$ and the
products of mismatched pair cannot be placed at $x^{2i}$ for any
$i$. However, carries can violate the claim. One can avoid a over-long
carry by not
assigning even numbers for indices or not assigning odd  numbers for
indices because $\log k^2 < a$. 

Every progression-free set $S \subseteq \{1, 2, \ldots, k\}$ has size at
most $k^{1-\epsilon}$ for any constant $\epsilon >
0$~\cite{Roth53,Bourgain07} and there exist efficient
algorithms for finding one set of that
size~\cite{Salem42,Behrend46,Moser53,Elkin10}. On a TM, one can use
Behrend's algorithm~\cite{Behrend46}, which relies on
finding a hyperball containing sufficiently many lattice points on
it, to find such a set. This can be reduced to multiplications as does
Lemma~\ref{lem:ellipse}.
By the Pigeon-hole principle, at least half the integers in $S$ are 
even or odd. Therefore, we can multiply $\ell$ pairs of two
$a$-bit integers by computing the product of two $ka$-bit
integers. 
\end{proof}

\begin{lemma}
\label{lem:onemany}
The products of a long multiplication can be computed by the products of independent short multiplications; in particular, 
$$\M(k^{1/2} n) = \O(\M_k (n)).$$
\end{lemma}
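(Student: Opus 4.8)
The plan is to reduce a single long multiplication of two $k^{1/2}n$-bit integers to exactly the batch of $k$ independent $n$-bit multiplications that $\M_k(n)$ counts, plus a cheap recombination step. First I would write the two inputs $A$ and $B$ in base $x = 2^n$, splitting each into $k^{1/2}$ blocks of $n$ bits: $A = \sum_{i=0}^{k^{1/2}-1} a_i x^i$ and $B = \sum_{j=0}^{k^{1/2}-1} b_j x^j$, where each $a_i, b_j$ is an $n$-bit integer. Extracting these blocks is just a relabeling of the tape contents and costs $\O(k^{1/2}n)$ time.

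The product is $AB = \sum_{i,j} a_i b_j\, x^{i+j}$, which involves exactly $k^{1/2}\cdot k^{1/2} = k$ cross products $a_i b_j$, each the product of two $n$-bit integers. Since all the $a_i$ and $b_j$ are available at the outset, these $k$ products are precisely a batch that can be computed in time $\M_k(n)$ by definition. This is the only multiplication work; everything that remains is addition.

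For the recombination I would group the partial products by diagonal: for each $s \in \{0, 1, \ldots, 2k^{1/2}-2\}$ form $c_s = \sum_{i+j=s} a_i b_j$, a sum of at most $k^{1/2}$ numbers of $2n$ bits, so $c_s$ has $\O(n)$ bits (the $\O(\log k)$ carry bits are absorbed, assuming the mild condition $\log k = \O(n)$). Then $AB = \sum_s c_s x^{s}$ is assembled by a single left-to-right carry-propagating pass, since consecutive diagonals overlap in only $\O(n)$ bit positions. On a multitape Turing machine this can be organized with a constant number of tapes so that the total head movement is linear in the number of bits handled, and since the $k$ partial products contain $\O(kn)$ bits in all, the recombination runs in $\O(kn)$ time.

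Finally I would observe that $\M_k(n) = \Omega(kn)$: the input to the batch consists of $2kn$ bits, every one of which must be read, so no algorithm can solve the batch faster than $\Omega(kn)$. Hence the recombination cost $\O(kn)$ is dominated by $\O(\M_k(n))$, and the total time for the long multiplication is $\O(\M_k(n))$, giving $\M(k^{1/2}n) = \O(\M_k(n))$. The step I expect to be the main obstacle is the Turing-machine bookkeeping in the recombination: laying the $k$ partial products out so that the diagonal sums together with the final carry propagation cost only $\O(kn)$ head moves, rather than incurring an extra logarithmic factor from repeatedly repositioning the heads.
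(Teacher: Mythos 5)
Your proposal is correct and takes essentially the same approach as the paper: split each $k^{1/2}n$-bit operand into $k^{1/2}$ blocks of $n$ bits, compute the $k$ cross products $a_i b_j$ as a single batch in time $\M_k(n)$, and recombine by summing shifted partial products. You in fact supply details the paper's two-sentence sketch omits, namely the $\O(kn)$ bound on the recombination pass and the observation that $\M_k(n) = \Omega(kn)$ absorbs this cost.
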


\begin{proof}
We partition the $k^{1/2}n$-bit integers into $k^{1/2}$ chunks. Then, to compute the product of the integers, we compute the products of pairwise chunks and then sum the products up. There are $k$ pairs of chunks and they have no dependency. That means the product of pairwise chunks can be computed in parallel, completing the proof. 
\end{proof}

Since we restrict that the factorial $n!$ is output by a multiplication, there must be a multiplication  
$a_1 \times b_1 = a_0 = n!$
in every algorithm. Besides,
we restrict that only the integers of $o(n \log n)$ bits and intermediate products can be multiplied. Therefore, $a_1, b_1$ are small integers or the computed intermediate
products. Let $\l{x}$ denote the number of bits in $x$.

If $\l{a_i} > \l{a_0}/2$, then $a_i$ has more than $o(n \log n)$ bits. Therefore, $a_i$ is also an intermediate product and assert the existence of a multiplication $a_{i+1} \times b_{i+1} = a_i$. We can repeat this until some $\l{a_i} \le \l{a_0}/2$. We define $t$ to be the step where it stops.
 Therefore, there must be $t$
multiplications, 
$
a_i \times b_i = a_{i-1} \mbox{ for all } i \in [1, t],
$
in any algorithm that can compute the factorial. In other words, we
have a lower bound of
\begin{equation}
\label{eqn:lower}
\sum_{i \in [1, t]} \M(\l{a_i}, \l{b_i}).
\end{equation}
W.l.o.g., let $\l{a_i} \ge \l{b_i}$ and therefore $\l{a_i} \ge \l{a_0}/4 \mbox{ for all } i \in [1, t].$

Let us simplify Equation~\eqref{eqn:lower} by observing the distribution of $b_i$'s. Consider that  
$$
a_t \prod_{i \in [1, t]} b_i = a_0 \mbox{ and } \sum_{i \in [1, t]} \l{b_i} \ge \l{a_0} - \l{a_t},
$$
then
$
\mu = (\l{b_1}+\l{b_2}+\cdots+\l{b_t})/t \ge \l{a_0}/(2t).
$
Furthermore, for any $\gamma \in [1, t]$, if there is no $b_i$ more than $\gamma\mu$, then there are $t/\gamma$ $b_i$'s more than $\mu/2$, which is an extension of Markov's inequality. We are ready to show the lower bound in Lemma~\ref{lem:tradeoff}.

\begin{lemma}
\label{lem:tradeoff}
Computing the factorial $n!$ has a lower bound
$$
\Omega\bigg(\M_{t^{1/2-\epsilon}}\bigg(\frac{1}{t} n\log n\bigg)\bigg)
$$
where $t$ is a parameter to be determined later. 
\end{lemma}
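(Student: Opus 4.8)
The plan is to lower-bound the sum in Equation~\eqref{eqn:lower}, $\sum_{i\in[1,t]}\M(\l{a_i},\l{b_i})$, by a single quantity of the shape $\M_{t^{1/2-\epsilon}}(\frac1t n\log n)$. Write $m=\frac1t n\log n$. Since $\l{a_0}=\Theta(n\log n)$, the average bit length satisfies $\mu\ge \l{a_0}/(2t)=\Theta(m)$. I would lean on three facts throughout: (i) Lemma~\ref{lem:simple}, which gives $\M(\l{a_i},\l{b_i})=\Omega(\M(\l{b_i}))$ because $\l{a_i}\ge\l{b_i}$; (ii) the standing assumption that $f(n)=\M(n)/n$ is non-decreasing, which yields the superadditivity $\M(cx)\ge c\,\M(x)$ for every $c\ge 1$ (and hence also $\M(\mu)=\Omega(\M(m))$ despite the constants); and (iii) the serial bound $\M_k(m)\le k\,\M(m)$ recorded in the text, which lets a sum of $k$ copies of $\M(m)$ dominate $\M_k(m)$.

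The argument is then a two-case split governed by the threshold $\gamma=t^{1/2-\epsilon}$, chosen to balance the two extreme shapes the multiset $\{\l{b_i}\}$ can take. In the first case, some $\l{b_i}$ exceeds $\gamma\mu$; then that one term alone costs $\Omega(\M(\l{b_i}))=\Omega(\M(\gamma\mu))$, and superadditivity turns this into $\Omega(\gamma\,\M(\mu))=\Omega(t^{1/2-\epsilon}\M(m))$, which is $\Omega(\M_{t^{1/2-\epsilon}}(m))$ by fact (iii). In the second case no $\l{b_i}$ exceeds $\gamma\mu$, so the extension of Markov's inequality stated just before the lemma applies with this $\gamma$ and produces at least $t/\gamma=t^{1/2+\epsilon}$ indices with $\l{b_i}>\mu/2$; each contributes $\Omega(\M(\mu/2))=\Omega(\M(m))$, so the sum is $\Omega(t^{1/2+\epsilon}\M(m))\ge\Omega(t^{1/2-\epsilon}\M(m))\ge\Omega(\M_{t^{1/2-\epsilon}}(m))$, again by fact (iii). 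Both cases deliver the claimed $\Omega(\M_{t^{1/2-\epsilon}}(\frac1t n\log n))$.

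The step I expect to be the crux is the balancing itself. An adversary writing $n!$ as a tree of multiplications may either funnel almost all of the $\Theta(n\log n)$ output bits into one large cofactor $b_i$ or spread them over many medium-sized cofactors, and the threshold $\gamma\approx\sqrt t$ is exactly the crossover where these two strategies cost the same; pushing $\gamma$ either way weakens one of the two cases. The slack $\epsilon$ in the exponent is what permits the final packaging through $\M_{t^{1/2-\epsilon}}$ using only the crude conversion $\M_k\le k\M$, rather than forcing a tight two-sided relation between $\M_k(m)$ and $k\,\M(m)$; getting that bookkeeping to line up with the eventual optimization over $t$ in Equation~\eqref{eqn:glower} is the only delicate part.
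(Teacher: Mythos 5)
Your proof is correct, and its skeleton matches the paper's: both start from the sum in Equation~\eqref{eqn:lower}, both use the extended Markov inequality to split on a threshold $\gamma\mu$, both invoke Lemma~\ref{lem:simple} to pass from $\M(\l{a_i},\l{b_i})$ to $\M(\l{b_i})$, and both package the result as $\M_k$ via $\M_k(m)\le k\,\M(m)$. But you handle the crucial branch --- the one where a single huge cofactor $\l{b_i}>\gamma\mu$ absorbs the cost --- by a genuinely different mechanism. The paper converts the single long multiplication $\M(\gamma\mu)$ into many independent short ones via Lemma~\ref{lem:manyone}, whose progression-free-set construction forces the $k^{1-\epsilon}$ density loss, and then balances the two branches in $\M_k$ form with $\gamma=t^{1/2+\epsilon}/2$. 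You instead use superadditivity, $\M(\gamma\mu)\ge\gamma\,\M(\mu)$, which follows from the paper's standing assumption that $\M(n)/n$ is non-decreasing; this reduces both branches to scalar multiples of $\M(\frac1t n\log n)$ and only at the very end do you downgrade $c\,t^{1/2-\epsilon}\M(m)$ to $\M_{t^{1/2-\epsilon}}(m)$. Your route buys two things: it makes Lemma~\ref{lem:manyone} (and the whole progression-free machinery) unnecessary for this lemma, and the $\epsilon$-loss evaporates at this stage --- with the optimal threshold $\gamma=\Theta(t^{1/2})$ you actually get the formally stronger scalar bound $\Omega(t^{1/2}\M(\frac1t n\log n))$, so your closing remark that the slack $\epsilon$ is needed for the packaging is backwards: the $\epsilon$ is an artifact of the paper's Lemma~\ref{lem:manyone}, not of yours. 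What the paper's route buys in exchange is independence from the monotone-ratio assumption: Lemma~\ref{lem:manyone} is a pure simulation, so the paper's proof of this lemma stands even without assuming $\M(n)/n$ monotone, whereas yours leans on it. One small quibble: your fact (ii) does not by itself give $\M(\mu)=\Omega(\M(m))$ when $\mu\ge cm$ for a constant $c<1$ --- the monotone ratio only yields $\M(cm)\le c\,\M(m)$, i.e., the wrong direction --- so that step really rests on the standard constant-factor smoothness of $\M$ (schoolbook decomposition); this is harmless, since the paper silently makes the identical move when it rewrites $\M(\l{a_0}/4,\gamma\mu)$ as $\M(\frac{\gamma}{t}n\log n)$. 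Finally, note that your scalar bound feeds into Theorem~\ref{thm:lower} just as well: converting $t^{1/2}\M(\frac1t n\log n)$ into single-$\M$ form still goes through Lemma~\ref{lem:onemany}-type reasoning and lands at the same $\M(n\log n/t^{3/4+\epsilon})$, so the final $\Omega(\M(n\log^{4/7-\epsilon}n))$ is unchanged.
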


\begin{proof}
By applying the extended Markov inequality to Equation~\eqref{eqn:lower}, one has the lower bound
$$
\sum_{i \in [1, t]} \M(\l{a_i}, \l{b_i}) \ge \max_{\gamma \in [1, t]} \min\bigg\{\M(\l{a_0}/4, \gamma\mu),\frac{t}{2\gamma}\M(\l{a_0}/4, \mu/2)\bigg\},
$$
which is, by Lemma~\ref{lem:simple}, more than 
$$
\max_{\gamma\in [1, t]}\min\bigg\{\M\bigg(\frac{\gamma}{t} n \log n\bigg), \frac{t}{2\gamma}\M\bigg(\frac{1}{2t} n \log n\bigg)\bigg\}.
$$
We convert the two terms to the same form and compare. 
We apply Lemma~\ref{lem:manyone} for 
the first term and the mentioned $\M_k(n) \le k\M(n)$ bound for the
second term, thus obtaining
$$
\max_{\gamma\in [1, t]}\min\bigg\{\M_{(2\gamma)^{1-\epsilon}}\bigg(\frac{1}{2t} n \log n\bigg), \M_{\frac{t}{2\gamma}}\bigg(\frac{1}{2t} n \log n\bigg)\bigg\}
$$
for any constant $\epsilon > 0$. Observe that $\M_{k} (a) \le
\M_{\ell} (a)$ if $k \le \ell$. As a result, we
have the following lower bound, by choosing $\gamma =
t^{1/2+\epsilon}/2$ for any constant $\epsilon > 0$,  
\begin{equation}
\Omega\bigg(\M_{t^{1/2-\epsilon}}\bigg(\frac{1}{t} n\log n\bigg)\bigg).
\end{equation}
\end{proof}

Observe that Lemma~\ref{lem:tradeoff} yields a good lower bound only
if $t$ is small. Our strategy is to find another lower bound which is
good when $t$ is large. Then, we can trade off between these lower
bounds. We finalize the proof for the claimed lower bound in
Theorem~\ref{thm:lower}.  

\begin{theorem}
\label{thm:lower}
On a TM, computing the factorial $n!$ has a lower bound 
$$\Omega(n \log^{4/7-\epsilon} n) \mbox{ for any constant }\epsilon > 0.$$
\end{theorem}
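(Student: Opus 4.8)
The plan is to complement the small-$t$ bound of Lemma~\ref{lem:tradeoff}, which degrades as $t$ grows, with a second lower bound that \emph{increases} in $t$, and then to balance the two at their crossover. The algorithm fixes the chain length $t$ (the first index with $\l{a_t}\le\l{a_0}/2$), both bounds hold for that $t$, so the running time is at least the maximum of the two; since we do not control $t$, the universal bound is the worst case over $t\in[1,n]$, attained at the crossover.

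First I would read the increasing bound directly off Equation~\eqref{eqn:lower}. Since $\l{a_i}\ge\l{a_0}/4=\Theta(n\log n)$ for every $i\in[1,t]$, Lemma~\ref{lem:simple} gives $\M(\l{a_i},\l{b_i})=\Omega(\l{a_i})=\Omega(n\log n)$, and summing the $t$ terms yields
\[
\sum_{i\in[1,t]}\M(\l{a_i},\l{b_i})=\Omega(t\,n\log n),
\]
which is the $w=\O(1)$ instance of the $\frac{t}{w}\,n\log n$ term in Equation~\eqref{eqn:glower} appropriate to the TM. This is exactly the regime where Lemma~\ref{lem:tradeoff} is weak. Next I would recast the small-$t$ bound into a comparable form: applying Lemma~\ref{lem:onemany} in the form $\M_k(m)=\Omega(\M(k^{1/2}m))$ to Lemma~\ref{lem:tradeoff}, with $k=t^{1/2-\epsilon}$ and $m=\tfrac1t n\log n$, gives
\[
\M_{t^{1/2-\epsilon}}\!\Big(\tfrac1t n\log n\Big)=\Omega\!\Big(\M\big(t^{-3/4-\epsilon/2}\,n\log n\big)\Big).
\]

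Finally I would balance the decreasing term $\M(t^{-3/4-\epsilon/2}n\log n)$ against the increasing term $t\,n\log n$ at a threshold $T=\log^{4/7}n$. For $t\le T$, monotonicity of $\M$ gives $\M(t^{-3/4}n\log n)\ge\M(T^{-3/4}n\log n)=\M(n\log^{4/7}n)=\Omega(\M(n\log^{4/7-\epsilon}n))$. For $t\ge T$ the increasing term gives $t\,n\log n=\Omega(n\log^{11/7}n)$, and here I would invoke the TM upper bound $\M(m)=m\log m\,2^{\O(\log^* m)}=\O(m\log^{1+\delta}m)$ to certify that the target is no larger, namely $\M(n\log^{4/7-\epsilon}n)=\O(n\log^{11/7-\epsilon+\delta}n)$, so the count dominates it once $\delta<\epsilon$. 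Either way the running time is $\Omega(\M(n\log^{4/7-\epsilon}n))$, and the linear lower bound $\M(m)=\Omega(m)$ then yields $\Omega(n\log^{4/7-\epsilon}n)$, proving Theorem~\ref{thm:lower}. The exponent $4/7$ is exactly the crossover: the $3/4$ arises from composing the $t^{1/2-\epsilon}$ count of Lemma~\ref{lem:tradeoff} with the $k^{1/2}$ of Lemma~\ref{lem:onemany} against the $1/t$ operand size, and solving $(4/3)(1-c)=c$ gives $c=4/7$.

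The main obstacle will be the large-$t$ case. The reason the exponent is pinned to $4/7$ (rather than something larger) is that this case must be discharged through the \emph{upper} bound on $\M$, so that the $\M(n\log^{4/7-\epsilon}n)$ target cannot outgrow the crude $\Omega(t\,n\log n)$ count, while the small-$t$ case uses only the \emph{lower} bound and monotonicity; the permissible threshold exponent $\tau$ must satisfy $c+\delta\le\tau\le(4/3)(1-c)$, which is feasible precisely when $c\le 4/7-(3/7)\delta$. Keeping the $\epsilon$ and $\delta$ slacks consistent, and checking that $T=\log^{4/7}n\le n$ so that both regimes are realizable over $t\in[1,n]$, is where the care is needed; this is also the step that makes the bound hinge on the $\log^{1+\epsilon}n$ gap between the lower and upper bounds of $\M$, explaining why the TM and log-RAM statements coincide.
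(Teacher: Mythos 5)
Your proposal is correct and follows essentially the same route as the paper: both combine the counting bound $\Omega(t\,n\log n)$ from Lemma~\ref{lem:simple} with Lemma~\ref{lem:tradeoff}, convert the latter to $\M(n\log n/t^{3/4+\epsilon})$ via Lemma~\ref{lem:onemany}, invoke the F\"{u}rer upper bound on $\M$ to make the two regimes comparable, and balance at $t \approx \log^{4/7} n$. Your explicit threshold case analysis ($t\le T$ versus $t\ge T$) is just a more spelled-out version of the paper's terse $\min_t \max\{\cdot,\cdot\}$ conversion, and your feasibility condition $c \le 4/7 - (3/7)\delta$ correctly explains why the exponent is pinned by the $\log^{1+\epsilon} n$ gap between the bounds on $\M$.
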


\begin{proof}
By Lemma~\ref{lem:simple}, one has
$$
\sum_{i \in [1, t]} \M(\l{a_i}, \l{b_i}) \ge \sum_{i \in [1, t]} (\l{a_i} + \l{b_i}) \ge t\frac{\l{a_0}}{4}.
$$
Combining the above lower bound and the lower bound shown in Lemma~\ref{lem:tradeoff}, we obtain
\begin{equation}\label{eqn:reqlower}
\Omega\bigg(\min_{t}  \max\bigg\{\M_{t^{1/2-\epsilon}}\bigg(\frac{1}{t} n\log n\bigg), tn\log n\bigg\}\bigg).
\end{equation}
Again, we convert the two terms to the same form and compare. 
We apply Lemma~\ref{lem:onemany} for the first term and apply
the current upper bound of $\M(n) \le n \log n 2^{\O(\log^* n)}$ for
the second term. Then, the lower bound becomes 
\begin{equation}\label{eqn:requpper}
\Omega\bigg(\min_t \max \M\bigg(\frac{n \log n}{t^{3/4+\epsilon}}\bigg), \M\bigg(\frac{tn}{2^{\O(\log^* n)}}\bigg) \bigg).
\end{equation}

The optimal bound appears at $t = \log^{4/7-\epsilon} n$ for any constant $\epsilon > 0$ as desired.
\end{proof}

\begin{corollary} \label{cor:lower}
On a log-RAM, computing the factorial $n!$ has a lower bound 
$$\Omega(n \log^{4/7-\epsilon} n) \mbox{ for any constant } \epsilon > 0.$$
\end{corollary}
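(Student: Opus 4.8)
The plan is to re-run the proof of Theorem~\ref{thm:lower} with the word size $w$ kept as a free parameter throughout, and only at the end specialize to the log-RAM value $w = \Theta(\log n)$. The key observation is that almost none of the machinery is Turing-machine--specific: the reduction isolating the chain of $t$ multiplications $a_i \times b_i = a_{i-1}$, the extended-Markov averaging that produces the two competing terms, and Lemmas~\ref{lem:manyone} and~\ref{lem:onemany} all manipulate the \emph{abstract} multiplication function $\M$ of the ambient model and never invoke the tape geometry. In particular, Lemma~\ref{lem:onemany} (splitting one long multiplication into independent chunk products) and Lemma~\ref{lem:manyone} (packing short multiplications into one long one along a progression-free set) transfer verbatim to the log-RAM, since Behrend's set is still built through the lattice-point multiplication of Lemma~\ref{lem:ellipse}. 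Consequently the general bound of Equation~\eqref{eqn:glower}, namely $\Omega\!\big(\max_t\{\M_{t^{1/2-\epsilon}}(\frac{1}{t}n\log n),\ \frac{t}{w}n\log n\}\big)$, already carries $w$ explicitly and applies directly once $w$ is instantiated.

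The only genuinely model-dependent ingredient is Lemma~\ref{lem:simple}. On the log-RAM a single step reads or writes one $w$-bit word, so the bit-counting bound weakens from $\M(a,b)=\Omega(a+b)$ to $\M(a,b)=\Omega((a+b)/w)$. I would first record this and note that it is precisely the source of the $1/w$ factor in the second term of Equation~\eqref{eqn:glower}: the sum $\sum_i \M(\l{a_i},\l{b_i}) \ge \sum_i(\l{a_i}+\l{b_i})/w \ge t\l{a_0}/(4w)$ now gives $\Omega(tn)$ rather than $\Omega(tn\log n)$. Specializing $w = \Theta(\log n)$ therefore collapses the second term, so the log-RAM lower bound reads $\Omega\!\big(\min_t \max\{\M_{t^{1/2-\epsilon}}(\frac{1}{t}n\log n),\ tn\}\big)$.

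From here I would convert both terms to a common $\M$-of-an-argument form, exactly as in the proofs of Lemma~\ref{lem:tradeoff} and Theorem~\ref{thm:lower}. The first term becomes $\Omega(\M(n\log n / t^{3/4+\epsilon}))$ by Lemma~\ref{lem:onemany}, and the second term $tn$ is rewritten as $\Omega(\M(tn))$ by using the log-RAM upper bound $\M(m)=\O(m)$ to absorb the count into a multiplication of a proportional argument. Since $\M$ is monotone, the $\min_t\max$ is controlled by balancing the two arguments, $n\log n/t^{3/4+\epsilon} = tn$, which yields $t = \log^{4/7-\epsilon} n$ and common argument $n\log^{4/7-\epsilon} n$, giving the bound $\Omega(\M(n\log^{4/7-\epsilon} n))$ claimed by the corollary.

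The step I expect to be the crux is justifying that this balance point, and hence the final argument, is identical to the Turing-machine case even though the upper bound driving the conversion has a different shape. On the TM the second term $tn\log n$ is turned into $\M(tn/2^{\O(\log^* n)})$ via $\M(m)\le m\log m\,2^{\O(\log^* m)}$, whereas on the log-RAM the second term $tn$ is turned into $\M(tn)$ via $\M(m)=\O(m)$. These two conversions land on the same argument precisely because the extra $\log n$ that the TM upper bound supplies is, on the log-RAM, instead stripped off at the outset by the factor $1/w = 1/\Theta(\log n)$; this single cancellation is the concrete content of the two models sharing a $\log^{1+\epsilon} n$ gap between the lower and upper bounds of $\M$. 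Making this cancellation explicit, and verifying it does not shift $t$ or the argument, is the one place the log-RAM argument must be checked rather than copied, and it is what establishes that the $\M$-relative lower bound is the \emph{same} as on the Turing Machine.
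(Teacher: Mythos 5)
Your proposal is correct and follows the paper's own route: the paper's proof of Corollary~\ref{cor:lower} consists exactly of substituting the log-RAM lower bound $\M(n)=\Omega(n/\log n)$ into Equation~\eqref{eqn:reqlower} (turning the counting term $tn\log n$ into $tn$) and the log-RAM upper bound $\M(n)=\O(n)$ into Equation~\eqref{eqn:requpper} (turning the second term into $\M(tn)$), then rebalancing at $t=\log^{4/7-\epsilon}n$. Your write-up merely makes explicit what the paper leaves as ``by similar analysis,'' including the observation (already noted in the paper's introduction) that the two models land on the same bound because each has a $\log^{1+\epsilon}n$ gap between its lower and upper bounds on $\M$.
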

\begin{proof}
We replace the lower bound of $\M(n)$ in Equation~\ref{eqn:reqlower} with $\Omega(n/\log n)$ and replace the upper bound of $\M(n)$ in Equation~\ref{eqn:requpper} with $\O(n)$. By similar analysis, we are done. 
\end{proof}

\input{bib.ref}

\section{Background and Related Work (Cont'd) \label{sec:background}}

\subsection{Computing Binomial Coefficients}
Consider, as an example, the computation of the central binomial coefficient $\binom{n}{n/2}$, a simple algorithm for which is to compute $n!$ and $(n/2)!$ independently and divide $n!$ by the square of $(n/2)!$. However, $n!$ and $(n/2)!$ each have $\Theta(n \log n)$ bits, which is much more than the $\Theta(n)$ bits that $\binom{n}{n/2}$ has. One can do something clever by cancelling the common factors between the numerator and denominator.  For example, when $n$ a multiple of 24,  
$$
\binom{n}{n/2} = 
Q_{n} Q_{n/2}^{-1}Q_{n/4}^{-1} \binom{n/4}{n/8} \binom{n/3}{n/6} \binom{n/12}{n/24}^{-1},
$$
where $Q_{n}$ is the product of positive integers in $\{k \le n \mid gcd(k, 6) = 1\}$. For $n$ not a multiple of 24, there are at most $23 = \O(1)$ further multiplications needed to compute the value. 

This approach reduces the number of multiplications from $n$ to $19 n/24$. Some of these multiplications can be further reduced by a recursive call; however, $Q_n$ requires $\Omega(n)$ multiplications. 

One can reduce the number of multiplication required for $Q_n$ by letting $Q_n$ be the product of integers in $\{k \le n \mid gcd(k, p) = 1 \mbox{ for each prime } p \le t\}$, where $t$ is a chosen threshold. By Merten theorem~\cite{Dickson05}, $Q_n$ is a product of $\O(n / \log t)$ integers. To make $Q_n$ be a product of $\O(n /\log n)$ integers, it is necessary to sieve out the multiples of $n^{\Omega(1)}$ primes.  The running time for this 
matches that for computing prime tables if Sch\"{o}nhage algorithm is used.

Suppose a smaller $t$ is chosen, this approach needs $\Omega((\log n)\M(n \log n/\log t))$ time, which is more than $\M(n\log n)$, assuming that the conjecture $\M(n) = \Theta(n \log n)$ holds.

A folk algorithm to compute the binomial coefficient $\binom{n}{k}$ more efficiently given $T_n$  is based on Kummer's theorem~\cite{Kummer52}, stating that for each prime $p$, the largest natural number $r$ such that $p^r$ divides $\binom{n}{k}$ can be computed in $\O(\log n/\log p)$ time. We analyze the complexity of this approach and show in Section~\ref{sec:upper} that it is $\O(\M(n)+\P(n))$.

\subsection{Computing $n!$}

There exist several efficient algorithms to calculate $n!$~\cite{Borwein85,Boiten91,Vardi91,Schonhage94,Ugur06}. Some~\cite{Borwein85,Boiten91} focus on reducing the total number of bits of intermediate products by grouping the $n$ integers into sub-groups, for example by commuting the product of each pair of successive integers, and then each pair of those products,  and so on.  The total number of bits of intermediate products is then greatly reduced to $\O(n \log^2 n)$.

Others~\cite{Borwein85,Vardi91,Schonhage94,Ugur06} focus on reducing the amount of shared computation between multiplications.  The idea is to use the observation that $p^n$ can be computed via $\O(\log n)$ multiplications, with intermediate products $p^2, p^4, p^8, \ldots, p^n$, instead of by $\O(n)$ iterative multiplications by $p$.  In order to use this to compute $n!$, such algorithms decompose $n!$ into prime factors, say $p_1^{r_1}p_2^{r_2}\cdots$,  and achieve their speedups by carefully scheduling multiplications in order to reduce the number of intermediate products.  

Borwein~\cite{Borwein85} divides the factors $p_i$ into $\O(\log n)$ groups $G_1, G_2, \ldots$ where 
\begin{equation*}
G_j = \{p_i \mid \mbox{the $j$-th bits of $r_i$ is 1 in base 2}\}.
\end{equation*}
Let $\pi_{j} = \prod_{p \in G_j} p$. Since each factor in the same group $G_j$ has the same exponent, then $n! = \prod_{j} \pi_{j}^{2^{j-1}}$. One can compute the product $\pi_j$ first and compute its power $\pi_j^{2^{j-1}}$ later. This greatly reduces the amount of shared computation. Borwein shows that this approach runs in $\O(\M(n\log n)\log \log n + \P(n))$ time.  Note that, as Sch\"{o}nhage pointed out, Borwein did not include the time to compute the prime table but took the table as given.

Sch\"{o}nhage et al.~\cite{Schonhage94} presented a variation of Borwein's algorithm by factoring $n!$ as follows:
\begin{equation}
\label{eqn:Schonhage}
n! = \pi_{1} (\pi_2 (\pi_3 (\pi_4 \cdots )^2)^2 )^2.
\end{equation}
This approach takes advantages on the fact that multiplying before exponentiating is faster than exponentiating each term in a product independently. This algorithm  has run time 
$\O(\M(n \log n) + \P(n))$.  Sch\"{o}nhage gave an $\O(n\log^2n\log\log n)$ algorithm to compute a prime table.  At the time of publication, this constituted a $\log\log n$ factor improvement over Borwein's algorithm for computing $n!$.  Given F\"{u}rer's improvement on multiplication, this improvement is down to $2^{\O(\log^* n)}$.

Using an approach similar to Sch\"{o}nhage's, Vardi~\cite{Vardi91} independently gave an algorithm based on the identity:
\begin{equation}
\label{eqn:Vardi}
n! = \binom{n}{n/2} \bigg(\binom{n/2}{n/4} \bigg(\binom{n/4}{n/8} \bigg(\binom{n/8}{n/16} \cdots\bigg)^2\bigg)^2\bigg)^2 \mbox{ for } n = 2^k.
\end{equation}
One might wonder what the difference is between Equations~\eqref{eqn:Schonhage} and~\eqref{eqn:Vardi} at the first glance. Note that $\pi_1 = \binom{n}{n/2}$
if and only if $\binom{n}{n/2}$ is squarefree and similar to other $\pi_i$'s.
However, Erd\"{o}s' squarefree conjecture~\cite{Erdos80} states that $\binom{2n}{n}$ is never squarefree for $n > 4$.  This was proved by Granville and Ramar\'{e}~\cite{Granville96}.  This implies that Sch\"{o}nhage's algorithm performs fewer  multiplications than Vardi's. Vardi did not analyze the complexity his algorithm. We analyze Vardi's algorithm in Section~\ref{sec:upper} and show that it has the same asymptotic complexity as Sch\"{o}nhage's, that is $\O(\M(n \log n)+\P(n))$, as long as the binomial coefficients are computed in time $\O(\M(n)+\P(n))$. 

However, it is possible that a faster algorithm to compute binomial coefficients exists, one that does not rely on prime table computation.  Therefore, there exists some hope that the second term in the time complexity might be removed, even if no faster algorithm is given for prime table computation.

\section{Factorials and Binomials \label{sec:upper}}

We analyze the complexity of computing the factorial $n!$ by Vardi's algorithm~\cite{Vardi91}. Since Vardi's algorithm relies on the computation of central binomial coefficients, we begin by analyzing the complexity of computing the binomial coefficient $\binom{n}{k}$.

\subsection{Computing Binomial Coefficients in $\O(\M(n)+\P(n))$ Time}

It is known that binomial coefficients can be efficiently computed by Kummer's Theorem~\cite{Kummer52,Vardi91}. However, the exact complexity is not known. Here we give an analysis.  

Kummer's Theorem~\cite{Kummer52} states that, for any binomial
coefficient $\binom{n}{k}$, any prime $p$, the maximum integer $r$
such that $p^r$ divides $\binom{n}{k}$ is equal to the number of
carries occur when adding $n-k$ and $k$ in base $p$. Therefore, the
prime factorization $p_1^{r_1}p_2^{r_2}\cdots p_t^{r_t}$ of
$\binom{n}{k}$ can be computed by trying every possible prime from
$2$ to $n$. Each trial requires $\O((\log_p n)\M(\log n))$ time
because division and modular arithmetics on $\O(\log n)$-bits integers require $\O(\M(\log n))$ time~\cite{Knuth97}. Hence, the prime factorization of $\binom{n}{k}$ can be obtained in $\O(\M(n))$ time due to Lemma~\ref{lem:sum}.

\begin{lemma}\label{lem:sum}
Let $p_1^{r_1}p_2^{r_2}\cdots p_t^{r_t}$ be the prime factorization of $\binom{n}{k}$. Then, 
$$\sum_{i \le t} r_i = \O\left(\sum_{\mbox{\scriptsize prime } p \le n} \log_{p} n\right) = \O(n/\log n).$$
\end{lemma}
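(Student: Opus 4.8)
The plan is to split the claim into its two asserted bounds and treat them separately. First I would establish $\sum_{i \le t} r_i = \O\big(\sum_{p \le n} \log_p n\big)$ directly from Kummer's theorem. Every prime dividing $\binom{n}{k}$ is at most $n$, so it suffices to bound, for each prime $p \le n$, the exponent $r_p$ (the $r_i$ corresponding to $p_i = p$). By Kummer's theorem $r_p$ equals the number of carries produced when adding $k$ and $n-k$ in base $p$; this is at most the number of base-$p$ digits of $n$, namely $\lfloor \log_p n \rfloor + 1 = \O(\log_p n)$. Summing over all primes $p \le n$ then gives the first equality. Equivalently, one may invoke Legendre's formula, which expresses $r_p = \sum_{j \ge 1}(\lfloor n/p^j\rfloor - \lfloor k/p^j\rfloor - \lfloor (n-k)/p^j\rfloor)$ as a sum of $\O(\log_p n)$ terms, each lying in $\{0,1\}$.

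For the second bound I would rewrite $\log_p n = \log n / \log p$ and estimate the prime sum by splitting the range at $\sqrt{n}$. For primes $p > \sqrt{n}$ we have $\log_p n < 2$, and the elementary Chebyshev bound $\pi(n) = \O(n/\log n)$ shows there are only $\O(n/\log n)$ of them, so their total contribution is $\O(n/\log n)$. For primes $p \le \sqrt{n}$ I would use only the crude bound $\log_p n \le \log_2 n = \O(\log n)$, but now there are merely $\pi(\sqrt{n}) = \O(\sqrt{n}/\log n)$ such primes, so their aggregate contribution is $\O(\sqrt{n}) = o(n/\log n)$. Adding the two ranges yields $\sum_{p \le n}\log_p n = \O(n/\log n)$, which completes the proof.

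The argument uses only the elementary Chebyshev upper bound on $\pi$, so no deep analytic number theory is required. The one point that needs care is the choice of the splitting threshold: it must be large enough that the many primes above it each contribute only $\O(1)$ to the exponent sum, yet small enough that the primes below it, where a single exponent can be as large as $\Theta(\log n)$, are too few to matter. The threshold $\sqrt{n}$ balances these two effects exactly, being the point at which $p^2$ exceeds $n$ and the higher-order terms of Legendre's formula vanish. I do not anticipate any genuine obstacle beyond getting this balance right.
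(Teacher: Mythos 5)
Your proof is correct and follows essentially the same route as the paper: bound each exponent by $\O(\log_p n)$ via Kummer's theorem, then split the prime sum at a threshold and use Chebyshev's bound $\pi(x) = \O(x/\log x)$ on each range. The only difference is cosmetic --- you split at $\sqrt{n}$ while the paper splits at $\gamma = n/\log n$ --- and both choices balance the two ranges to give $\O(n/\log n)$.
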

\begin{proof}
By Kummer's Theorem~\cite{Kummer52}, we have $r_i = \O(\log_{p_i} n)$. Since 
$$
\sum_{\mbox{\scriptsize prime } p \le n} \log_p n \le 
\sum_{\mbox{\scriptsize prime } p_i < \gamma} \log_2 n + 
\sum_{\mbox{\scriptsize prime } p_i \in [\gamma, n]} \log_\gamma n,
$$
choosing $\gamma$ as $n/\log n$ gives the bound $\O(n / \log n)$. 
\end{proof}

Then, multiplying the prime factors pairwise until their product
$\binom{n}{k}$ is computed gives the running time shown in
Theorem~\ref{thm:cbc}. 

\begin{theorem}\label{thm:cbc}
A binomial coefficient $\binom{n}{k}$ can be computed in $\O(\M(n))$
time given a prime table from $2$ to $n$. 
\end{theorem}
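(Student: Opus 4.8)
The plan is to take the prime factorization $\binom{n}{k}=\prod_{i\le t}p_i^{r_i}$ produced by the Kummer step and form the product of the prime powers, but to schedule the multiplications so that the total cost telescopes to $\O(\M(n))$ rather than to $\O(\M(n)\log n)$. Two facts from Lemma~\ref{lem:sum} drive the accounting: the factorization has total size $\sum_i r_i\log p_i=\log\binom{n}{k}=\O(n)$ bits, while the number of prime factors counted with multiplicity is only $\sum_i r_i=\O(n/\log n)$. First I would form each prime power $p_i^{r_i}$ by repeated squaring. Since $r_i\le\log_{p_i}n\le 1$ for every $p_i>\sqrt{n}$, only the $\O(\sqrt n/\log n)$ primes below $\sqrt n$ require any exponentiation at all, and those contribute total size $\O(\sqrt n)$; hence this preprocessing costs $\O(\log\log n\cdot\M(\sqrt n))=\O(\M(n))$ and is dominated by the final product.

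The core is the product itself, and the only analytic tool is the superadditivity $\M(a)+\M(b)\le\M(a+b)$, which follows from the standing assumption that $\M(n)/n$ is non-decreasing: it implies that the multiplications at any one level of a product tree, whose outputs have total size $\O(n)$, cost $\O(\M(n))$ in aggregate. A balanced tree over the $\O(n/\log n)$ prime powers therefore costs $\O(\M(n))$ per level, but it has $\Theta(\log n)$ levels, which only yields $\O(\M(n)\log n)$. To remove this factor I would borrow Borwein's grouping~\cite{Borwein85}: set $\pi_j=\prod\{p_i:\text{the $j$-th bit of }r_i\text{ is }1\}$ and evaluate $\binom{n}{k}=\pi_1(\pi_2(\pi_3\cdots)^2)^2$ by nested squaring, as in Sch\"onhage's factorial algorithm~\cite{Schonhage94}. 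Writing $P_j$ for the partial product opened at level $j$, the recursion $P_j=\pi_j P_{j+1}^2$ gives $\log P_{j+1}\le\tfrac12\log P_j$, so the partial products shrink geometrically from $\log P_1=\log\binom{n}{k}=\O(n)$. The squarings and multiplications that combine the already-formed groups then cost $\sum_j\M(\O(n/2^{j}))=\O(\M(n))$, the geometric sum being dominated by its first term precisely because $\M$ is superlinear.

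The step I expect to be the main obstacle is bounding the cost of \emph{forming} the group products $\pi_j$, and in particular the squarefree part carried by $\pi_1$, which is a single product of $\Theta(n/\log n)$primes of total size $\O(n)$. Here the geometric decay of the group sizes $\log\pi_j=\O(n/2^{j})$ (which follows from $\sum_j 2^{j-1}\log\pi_j=\O(n)$ and $\log\pi_j\le\log P_j$) does \emph{not} by itself help, because each $\pi_j$ is built from $\Theta(n/(2^j\log n))$ primes and a balanced product tree inside a group has $\Theta(\log n)$ levels, reintroducing the logarithmic factor. Overcoming this is the crux: I would need to argue that the within-group products can be scheduled so that their costs also telescope against the geometric size profile instead of being charged at full width across $\Theta(\log n)$ levels, exploiting that there are only $\O(\log\log n)$ groups (since every $r_i\le\log_2 n$ has $\O(\log\log n)$ bits) and that the total factor count is $\O(n/\log n)$ rather than $\O(n)$. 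Establishing that this scheduling keeps the grand total at $\O(\M(n))$, rather than $\O(\M(n)\log n)$, is the delicate calculation on which the theorem rests.
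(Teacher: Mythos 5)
Your factorization phase is the same as the paper's: apply Kummer's theorem to every prime up to $n$, and use Lemma~\ref{lem:sum} both to bound the trial cost by $\O(\M(n))$ and to get the two counts ($\O(n)$ total bits, $\O(n/\log n)$ factors with multiplicity) that drive the product phase. The product phase is where your proposal stops short, and the gap you flag is genuine: you have no argument bounding the cost of forming the group product $\pi_1$ by $\O(\M(n))$, and Borwein's grouping cannot supply one. For $\binom{n}{n/2}$, every prime factor in $(\sqrt{n},n]$ appears to exponent at most one, and the primes below $\sqrt{n}$ contribute only $\O(\sqrt{n})$ bits, so $\pi_1$ alone is a product of $\Theta(n/\log n)$ primes carrying $n-\O(\sqrt{n})$ bits --- it \emph{is} the original problem. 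The nested-squaring telescoping you set up correctly handles only the combining multiplications, which were never the bottleneck. A balanced tree over $\pi_1$'s factors costs $\Theta(\M(n)\log n)$ whenever $\M$ is quasi-linear (with $\M(m)=\Theta(m\log m)$ it is $\Theta(n\log^2 n)$), and the paper's sole assumption that $\M(m)/m$ is non-decreasing does not rule this regime out; indeed F\"urer's bound, which the paper relies on elsewhere, lies in it. So the proposal does not prove the theorem.

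You should know, however, that the paper's own proof does not close this gap either. Its entire treatment of the product phase is the sentence ``multiplying the prime factors pairwise until their product $\binom{n}{k}$ is computed gives the running time shown in Theorem~\ref{thm:cbc},'' which is precisely the balanced product tree that you (correctly) computed to cost $\O(\M(n)\log n)$ rather than $\O(\M(n))$. In other words, the crux you isolate --- scheduling the multiplication of $\Theta(n/\log n)$ small factors so that the total cost is $\O(\M(\mbox{output bits}))$ without losing a $\log n$ factor to the tree depth --- is exactly the step the paper asserts without justification, and your accounting of why the obvious schedules fail is more careful than anything in the paper. Two mitigating remarks: first, the weaker bound $\O(\M(n)\log n + \P(n))$, which both you and the paper do establish, suffices for the only use of Theorem~\ref{thm:cbc} in the paper, namely Vardi's recurrence in Theorem~\ref{thm:fac}, since $\M(n)\log n = \O(\M(n\log n))$ by monotonicity of $\M(m)/m$; second, if one is willing to assume $\M(m)=\Omega(m^{1+c})$ for some constant $c>0$, the tree cost is dominated by its root and the paper's one-line argument becomes valid, but that assumption is incompatible with the multiplication algorithms the paper invokes.
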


For Vardi's algorithm, we only care about central binomial
coefficients, but of course these are just a special case of this theorem.

\subsection{Factorial is in $\O(\M(n \log n))$}

Vardi compute the factorial $n!$ by the identity 
\begin{equation}
n! = \bigg( (n)^{r} \binom{n}{n/2} (n/2)!\bigg) (n/2)!, 
\end{equation}
where $r \equiv n \pmod 2$ and $n/2$ denotes integral division, i.e., $n/2 = \lfloor n/2 \rfloor$. Note that there are four terms on the R.H.S. of the identity and each has $\O(n \log n)$ bits. Let $\F(n)$ denote the running time for computing the factorial $n!$. Then, we have the following recurrence relation, 
\begin{equation}
\F(n) = \F(n/2) + \O(\M(n \log n))
\end{equation}
due to Theorems~\ref{thm:main} and~\ref{thm:cbc}. Therefore, we have Theorem~\ref{thm:fac}.

\begin{theorem}\label{thm:fac}
The factorial $n!$ can be computed in $\O(\M(n \log n))$ time.
\end{theorem}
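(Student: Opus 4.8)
The plan is to directly solve the recurrence $\F(n) = \F(n/2) + \O(\M(n\log n))$ that the preceding discussion already establishes, where the additive $\O(\M(n\log n))$ term pays for computing the prime table up to $n$ (Theorem~\ref{thm:main}), extracting the central binomial coefficient $\binom{n}{n/2}$ from that table (Theorem~\ref{thm:cbc}, costing only $\O(\M(n))$), and performing the constant number of multiplications and squarings of $\O(n\log n)$-bit numbers needed to combine the four factors of Vardi's identity. Unrolling the recurrence across its $\O(\log n)$ levels gives $\F(n) = \sum_{i=0}^{\O(\log n)} \O(\M(m_i))$, where $m_i = (n/2^i)\log(n/2^i)$ is the bit length handled at depth $i$; since $m_i \le (n\log n)/2^i$, it suffices to bound $\sum_i \M\big((n\log n)/2^i\big)$.

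The key step is to invoke the standing assumption from the introduction that $f(n) = \M(n)/n$ is monotone non-decreasing. For any $m$ and any $i \ge 0$ we have $m/2^i \le m$, so $\M(m/2^i)/(m/2^i) \le \M(m)/m$, which rearranges to $\M(m/2^i) \le \M(m)/2^i$. In other words, halving the input size at least halves the multiplication cost, so the per-level costs decay at least geometrically down the recursion tree rather than merely staying bounded.

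Setting $m = n\log n$ and summing the resulting geometric series then yields $\F(n) \le \sum_{i \ge 0} \O(\M(n\log n)/2^i) = \O(\M(n\log n))$, as claimed. I do not expect a genuine obstacle here: the argument is a routine master-theorem-style collapse, and the only point requiring care is confirming that every piece of the per-level work --- in particular recomputing (or reusing) the prime table and the two squarings of $\O(n\log n)$-bit numbers --- is dominated by $\O(\M(n\log n))$, which the cited theorems guarantee.
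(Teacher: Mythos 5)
Your proposal is correct and takes essentially the same route as the paper: Vardi's identity gives the recurrence $\F(n) = \F(n/2) + \O(\M(n \log n))$ via Theorems~\ref{thm:main} and~\ref{thm:cbc}, exactly as the paper argues. The only difference is that you explicitly solve the recurrence by invoking the standing assumption that $\M(n)/n$ is non-decreasing to get geometric decay of the per-level costs --- a step the paper leaves implicit --- and this filling-in is done correctly.
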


\end{document}